\theoremstyle{plain}
\newtheorem{proposition}{Proposition}[section]
\newtheorem{theorem}[proposition]{Theorem}
\newtheorem{lemma}[proposition]{Lemma}
\newtheorem{conjecture}[proposition]{Conjecture}
\newtheorem{problem}[proposition]{Problem}
\def\mc{\mathcal}
\def\mb{\mathbf}
\DeclareMathOperator{\gn}{gn}
\DeclareMathOperator{\Win}{Win}
\DeclareMathOperator{\Image}{Im}
\DeclareMathOperator{\Reverse}{Reverse}
\DeclareMathOperator{\Sh}{Sh}
\DeclareMathOperator{\ZY}{ZY}
\DeclareMathOperator{\DFZ}{DFZ}
\DeclareMathOperator{\Ingl}{Ingl}
\begin{document}

\title{Graph Guessing Games and non-Shannon Information Inequalities }

\author{Rahil Baber
\thanks{School of Electronic Engineering and Computer Science,
Queen Mary, University of London, London, E1 4NS, U.K. Email: 
rahilbaber@hotmail.com.}
%
\and Demetres Christofides
\thanks{School of Sciences,
UCLan Cyprus, 7080 Pyla, Cyprus. Email:
d.christofides@uclan.ac.uk}
\and Anh N. Dang
\thanks{School of Electronic Engineering and Computer Science,
Queen Mary, University of London, London, E1 4NS, U.K. Email:
anh.dang@eecs.qmul.ac.uk}
\and S{\o}ren Riis
\thanks{School of Electronic Engineering and Computer Science,
Queen Mary, University of London, London, E1 4NS, U.K. Email:
s.riis@qmul.ac.uk}
\and Emil R. Vaughan
\thanks{School of Electronic Engineering and Computer Science,
Queen Mary, University of London, London, E1 4NS, U.K. Email:
e.vaughan@qmul.ac.uk} }

\date\today

\maketitle

\begin{abstract}
Guessing games for directed graphs were introduced by Riis \cite{Riis07} for studying multiple unicast network coding problems. In a guessing game, the players toss generalised dice and can see some of the other outcomes depending on the structure of an underlying digraph. They later guess simultaneously the outcome of their own die. Their objective is to find a strategy which maximises the probability that they all guess correctly. The performance of the optimal strategy for a  graph is measured by the guessing number of the digraph.



In \cite{Christofides&Markstrom11}, Christofides and Markstr\"om studied guessing numbers of undirected graphs and defined a strategy which they conjectured to be optimal. One of the main results of this paper is a disproof of this conjecture. 

The main tool so far for computing guessing numbers of graphs is information theoretic inequalities. The other main result of the paper is that Shannon's information inequalities, which work particularly well for a wide range of graph classes, are not sufficient for computing the guessing number.


Finally we pose a few more interesting questions some of which we can answer and some which we leave as open problems.

\end{abstract}

\section{Introduction}

Consider the following 2-player cooperative game: Two players toss a coin with each player seeing the outcome of the coin toss of the other player (but not their own). Then, they simultaneously guess the outcome of their own coin toss. The players win the game if they both guess correctly. Of course, if they both guess at random, then the probability of winning is $1/4$. It turns out that the players can use the extra information they have in order to improve the probability of success. For example, if they agree beforehand to follow the strategy `guess what you see' then the probability of success increases to $1/2$. We can generalise this game (see Section \ref{Sec:Definitions}) to guessing games with multiple players in which each player sees the outcome of the coin tosses (or more generally of many-sided dice throws) of other players, according to an underlying digraph. 

These guessing games \cite{Riis07,Riis07A} emerged from studying network coding problems \cite{AhlswedeCLY00} where the network is multiple unicast, i.e.~where each sender has precisely one corresponding receiver who wishes to obtain the sender's message, and a constrain that only one message can be sent through each channel at a time. A multiple unicast can be represented by a directed acyclic graph with $n$ inputs/outputs and $m$ intermediate nodes. By merging the vertices which represent the senders with their corresponding receiver vertices 
we can create an auxiliary directed graph which has the nice property that there is no longer any distinction between router, sender, or receiver vertices. Due to the way guessing games are defined, coding functions on the original network can be translated into strategies for the guessing game on the auxiliary graph and vice versa. The performance of the optimal strategy for a guessing game is measured by the guessing number which we will define precisely in Section \ref{Sec:Definitions}.

One of the first applications of guessing games was the disproval in \cite{Riis07} of two conjectures raised by Valiant \cite{Valiant} in circuit complexity in which he asked about the optimal Boolean circuit for a Boolean function. 

In this paper we provide a counterexample to a conjecture of Christofides and Markstr\"om given in \cite{Christofides&Markstrom11} which states that the optimal strategy for the guessing game of an undirected graph is based on the fractional clique cover number of the graph. (See Section \ref{Sec:Definitions} for more details.) Additionally, we will show that the guessing number for undirected graphs cannot be determined by considering only the {S}hannon information inequalities as explained in Section \ref{Sec:Entropy}. We will also make and investigate the Superman conjecture which suggests that the (asymptotic) guessing number of an undirected graph does not increase when a directed edge is added. Finally we will provide a possible example of a directed graph whose guessing number changes when its edges are reversed.

The outline of our paper is as follows. In Section \ref{Sec:Definitions} we introduce the formal language of guessing games. Section \ref{Sec:Asymptotic} is concerned with the asymptotic behaviour of guessing numbers. In Section \ref{Sec:FractionalCliqueCover} we formally define the fractional clique cover strategy from \cite{Christofides&Markstrom11} which provides a feasible computational method for calculating lower bounds of guessing numbers for undirected graphs. In Section \ref{Sec:Entropy} we introduce a method for calculating upper bounds of guessing numbers by making use of entropic arguments. Our main results appear in Section \ref{Sec:Main}. We then discuss some of the technical details of the computer searches we carried out in Section \ref{Sec:Speed}. We conclude with some open problems in Section \ref{Sec:Problems}.

\section{Definitions}\label{Sec:Definitions}

A \emph{directed graph}, or \emph{digraph} for short, is a pair
$G=(V(G),E(G))$, where $V(G)$ is the set of \emph{vertices} of $G$ and $E(G)$ is a
set of ordered pairs of vertices of $G$ called the \emph{directed edges} of $G$. Given a directed edge $e = (u,v)$, which we also denote by $uv$, we call $u$ the
\emph{tail} and $v$ the \emph{head} of $e$ and say that $e$ goes from $u$ to $v$.

For the purposes of guessing games we will assume throughout that our digraphs are loopless, i.e.\ they contain no edges of the form $uu$ for $u \in V(G)$. Once we define the guessing game it will be easily seen that the probability of winning on a digraph $G$ is equal to the probability of winning on the subgraph of $G$ obtained by removing all vertices with loops.


Given a digraph $G$ and a vertex $v\in V(G)$, the
\emph{in-neighbourhood} of $v$ is $\Gamma^-(v) = \{u : uv \in
E(G)\}$ and the \emph{out-neighbourhood}
of $v$ is $\Gamma^+(v) = \{u : vu\in E(G)\}$.

In this paper our main results will primarily be on \emph{undirected
graphs} which are naturally treated as a special type of digraph $G$ where
$uv\in E(G)$ if and only if $vu\in E(G)$. We call the pair of
directed edges $uv$ and $vu$, the \emph{undirected edge} $uv$. A major role in our guessing strategies will be played by \emph{cliques} i.e.\ subgraphs in which every pair of
vertices are joined by an undirected edge.

Given a digraph $G$ and an integer $t\geq 1$, the \emph{$t$-uniform
blowup of $G$} which we will write as $G(t)$ is a digraph formed by
replacing each vertex $v$ in $G$ with a class of $t$ vertices
$v_1,\ldots,v_t$ with $u_iv_j\in E(G(t))$ if and only if $uv \in
E(G)$.

A \emph{guessing game} $(G,s)$ is a game played on a digraph $G$ and
the alphabet $A_s = \{0,1,\ldots,s-1\}$. There are $|V(G)|$ players
working as a team. Each player corresponds to one of the vertices
of the digraph. Throughout the article we will be freely speaking about the player $v$ instead of the player corresponding to the vertex $v \in V(G)$. The players know the digraph $G$, the natural number $s$, and are told to which of the vertices they correspond to. They may discuss and agree upon
a strategy using this information before the game begins, but no
communication between players is allowed after the game starts.

Once the game begins, each player $v\in V(G)$ is
assigned a value $a_v$ from $A_s$ uniformly and independently at random. The players
do not have access to their own values but can see some of the values
assigned to the other players according to the digraph $G$. To be more
precise, once the values have been assigned each player is given a
list of the players in its in-neighbourhood with their corresponding
values. Using just this information each player must guess their own
value. If all players guess correctly they will all win, but if just one
player guesses incorrectly they will all lose. The objective of the
players is to maximise their probability of winning.

As an example we consider the guessing game $(K_n,s)$, where $K_n$
is the \emph{complete (undirected) graph} of order $n$, i.e.\ $|V(K_n)|=n$ and
$E(K_n)=\{uv : u,v\in V(G), u\neq v\}$. Naively we may think that
since each player receives no information about their own value that
each player may as well guess randomly, meaning that the probability
they win is $s^{-n}$. This however is not optimal. Certainly the
probability that any given player guesses correctly is $1/s$, but Riis \cite{Riis07} noticed that by
discussing their strategies beforehand the players can in fact coordinate the moments where
they guess correctly, and therefore increase their chance of winning. For example before the game begins they can
agree that they will all play under the assumption that
\begin{align}\label{Eq:Kn_Constraint}
\sum_{v\in V(K_n)} a_v \equiv 0\bmod s.
\end{align}
Player $u$ can see all the values except its own, and assuming
(\ref{Eq:Kn_Constraint}) is true it knows that
\[
a_u \equiv -\sum_{\stackrel{v\in V(K_n)}{v\neq u}} a_v\bmod s.
\]
Consequently player $u$ will guess that its value is $-\sum_{v\in V(K_n),
v\neq u} a_v\mod s$. Hence if (\ref{Eq:Kn_Constraint}) is true every
player will guess correctly and if (\ref{Eq:Kn_Constraint}) is
false every player will guess incorrectly. So the probability they all
guess correctly is simply the probability that (\ref{Eq:Kn_Constraint}) is true
which is $1/s$. This is clearly optimal as, irrespective of the
strategy, the probability that a single player guesses correctly is
$1/s$ and so we can not hope to do better.

We note that the optimal strategy given in the example was a
\emph{pure strategy} i.e.\ there is no randomness involved in the guess
each player makes given the values it sees. The alternative is a \emph{mixed strategy} in which the players randomly choose a strategy to play from a set of pure strategies. The winning probability of the mixed strategy is the average of the winning probabilities of the pure strategies weighted according to the probabilities that they are chosen. This however is at most the maximum of the winning probabilities of the pure strategies, and so we gain no advantage by playing a mixed strategy. As such throughout this paper we will only ever consider pure strategies.

Given a guessing game $(G,s)$, for $v\in V(G)$ a \emph{strategy for
player $v$} is formally a function $f_v:A_s^{|\Gamma^-(v)|}\to A_s$
which maps the values of the in-neighbours of $v$ to an elements of $A_s$, which will be the guess of $v$. A
\emph{strategy} $\mc{F}$ for a guessing game is a sequence of such
functions $(f_v)_{v\in V(G)}$ where $f_v$ is a strategy for player
$v$. We denote by $\Win(G,s,\mc{F})$ the event that all the players guess
correctly when playing $(G,s)$ with strategy $\mc{F}$. The players'
objective is to find a strategy $\mc{F}$ that maximises
$\mb{P}[\Win(G,s,\mc{F})]$.

Rather than trying to find $\max_{\mc{F}}\mb{P}[\Win(G,s,\mc{F})]$ we
will instead work with the \emph{guessing number} $\gn(G,s)$ which
we define as
\[
\gn(G,s) = |V(G)|+\log_s
\left(\max_{\mc{F}}\mb{P}[\Win(G,s,\mc{F})]\right).
\]
Although this looks like a cumbersome property to work with we can
think of it as a measure of how much better the optimal strategy is
over the strategy of just making random guesses, as
\[
\max_{\mc{F}}\mb{P}[\Win(G,s,\mc{F})] =
\frac{s^{\gn(G,s)}}{s^{|V(G)|}}.
\]
Later we will look at information entropy inequalities as a way of
analyzing the guessing game and in this context the definition of
the guessing number will appear more natural.

\section{The asymptotic guessing number}\label{Sec:Asymptotic}

Note that the guessing number of the example $(K_n,s)$ we
discussed earlier is represented by $\gn(K_n,s) = n-1$ which does
not depend on $s$. In general $\gn(G,s)$ will depend on $s$ and it
is often extremely difficult to determine the guessing number
exactly. Consequently we will instead concentrate our efforts on
evaluating the \emph{asymptotic guessing number} $\gn(G)$ which we
define to be the limit of $\gn(G,s)$ as $s$ tends to infinity. To
prove the limit exists we first need to consider the guessing number
on the blowup of $G$.

\begin{lemma}\label{Lem:BlowupIneq}
Given a digraph $G$, and integers $s, t\geq 1$,
\[
\max_{\mc{F}}\mb{P}[\Win(G(t),s,\mc{F})] \geq
\left(\max_{\mc{F}}\mb{P}[\Win(G,s,\mc{F})]\right)^t
\]
or equivalently $\gn(G(t),s)\geq t\gn(G,s)$.
\end{lemma}

\begin{proof}
The digraph $G(t)$ can be split into $t$ vertex disjoint copies of
$G$. We can construct a strategy for $(G(t),s)$ by playing the
optimal strategy of $(G,s)$ on each of the $t$ copies of $G$ in
$G(t)$. The result follows immediately.
\end{proof}

\begin{lemma}\label{Lem:BlowupEqual}
Given a digraph $G$, and integers $s, t\geq 1$,
\[
\max_{\mc{F}}\mb{P}[\Win(G(t),s,\mc{F})] =
\max_{\mc{F}}\mb{P}[\Win(G,s^t,\mc{F})]
\]
or equivalently $\gn(G(t),s) = t\gn(G,s^t)$.
\end{lemma}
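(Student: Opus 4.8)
The plan is to set up a bijection between strategies for the blowup game $(G(t),s)$ and strategies for the single game $(G,s^t)$, and show that this bijection preserves the winning probability. The key observation is that a vertex $v$ in $G$ corresponds to a class of $t$ vertices $v_1,\ldots,v_t$ in $G(t)$, each receiving an independent uniform value from $A_s$. The combined assignment $(a_{v_1},\ldots,a_{v_t}) \in A_s^t$ to this class can be identified, via a fixed bijection $A_s^t \cong A_{s^t}$, with a single value in the larger alphabet $A_{s^t}$. Under this identification, the uniform distribution on $A_s^t$ corresponds exactly to the uniform distribution on $A_{s^t}$, so the random assignment in the blowup game matches the random assignment in the single game on the larger alphabet.

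First I would check that the visibility structure is preserved under this correspondence. By the definition of the blowup, $u_i v_j \in E(G(t))$ if and only if $uv \in E(G)$, so player $v_j$ sees the value $a_{u_i}$ precisely when $uv$ is an edge of $G$; in particular $v_j$ sees every coordinate $a_{u_1},\ldots,a_{u_t}$ of the class of $u$ exactly when $u \in \Gamma^-(v)$ in $G$. Hence the collective information available to the whole class $\{v_1,\ldots,v_t\}$ is exactly the set of values $(a_{u_1},\ldots,a_{u_t})$ for $u \in \Gamma^-(v)$, which under the identification is precisely the list of $A_{s^t}$-values of the in-neighbours of $v$ in $G$. This is the crux of the argument: the information a single player $v$ has in $(G,s^t)$ is informationally equivalent to the joint information of its $t$ copies in $(G(t),s)$.

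Given this, I would translate strategies in both directions. A strategy $(f_v)_{v\in V(G)}$ for $(G,s^t)$ assigns to player $v$ a guess in $A_{s^t}$ based on the $A_{s^t}$-values of its in-neighbours; decoding through the bijection, this yields the $t$ guesses of $v_1,\ldots,v_t$ in $A_s$, each a function only of the values visible to that copy, so it is a legitimate strategy for $(G(t),s)$. The reverse translation is identical. Under either translation, every player in $(G,s^t)$ guesses correctly if and only if all $t$ of the corresponding players in $(G(t),s)$ guess correctly, simply because the bijection $A_s^t \cong A_{s^t}$ identifies a correct $A_{s^t}$-guess with a coordinatewise-correct tuple of $A_s$-guesses. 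Therefore the two $\Win$ events coincide under matched strategies and have equal probability.

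Taking the maximum over all strategies on each side, and using that the correspondence is a bijection on strategy sets, gives
\[
\max_{\mc{F}}\mb{P}[\Win(G(t),s,\mc{F})] = \max_{\mc{F}}\mb{P}[\Win(G,s^t,\mc{F})],
\]
and the equivalent statement for guessing numbers follows by taking $\log_s$ of both sides and noting that $|V(G(t))| = t\,|V(G)|$ while $\log_s(\cdot) = t\log_{s^t}(\cdot)$. I expect the main obstacle to be purely bookkeeping rather than conceptual: one must be careful that the strategy translation genuinely respects the visibility constraints in both directions (so that no player uses information it cannot access), and that the fixed bijection $A_s^t \cong A_{s^t}$ is applied consistently so that the measure-preserving property and the correctness-preserving property hold simultaneously. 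Once the correspondence is pinned down precisely, the equality of winning probabilities is immediate.
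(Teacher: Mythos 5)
Your proof is correct and takes essentially the same approach as the paper's: identifying $A_{s^t}$ with $A_s^t$ via a fixed bijection and translating strategies in both directions, the paper merely phrasing this as two separate inequalities (each player of $(G,s^t)$ pretending to be $t$ fictitious players, and each vertex class of $G(t)$ simulating a single player) rather than as one bijection on strategy sets. The key fact you both rely on is that, since $G$ is loopless, all $t$ copies of a vertex in $G(t)$ see exactly the same information --- the values of all copies of its in-neighbours --- so individual and collective information of a class coincide.
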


\begin{proof}
First we will show that the optimal probability of winning on
$(G,s^t)$ is at least that of $(G(t),s)$. This follows simply from
the fact that the members of the alphabet of size $s^t$, can be
represented as $t$ digit numbers in base $s$. Hence given a strategy
on $(G(t),s)$, a corresponding strategy can be played on $(G,s^t)$ by
each player pretending to be $t$ players: More precisely, if player $v$ gets assigned value $a \in A_{s^t}$, he writes it as $a_{t-1} \cdots a_1a_0$ in base $s$ and pretends to be $t$ players, say $v_0,v_1,\ldots,v_{t-1}$, where player $v_i$, for $0 \leq i \leq t-1$, gets assigned value $a_i \in A_s$. Furthermore, if player $v$ sees the outcome of player $u$, then he can construct the values assigned to the new players $u_0,u_1,\ldots,u_{t-1}$. So these new fictitious players can play the $(G(t),s)$ game using an optimal strategy. But if the fictitious players can win the $(G(t),s)$ game then the original players can win the $(G,s^t)$ game as we can reconstruct the value of $a$ from the values of $a_0,a_1,\ldots,a_{t-1}$.

A similar argument can be used to show that the optimal probability
of winning on $(G,s^t)$ is at most that of $(G(t),s)$. We will show
that for every strategy on $(G,s^t)$ there is a corresponding
strategy on $(G(t),s)$. Every vertex class of $t$ players can
simulate playing as one fictitious player by its members agreeing to
use the same strategy. The $t$ values assigned to the players in the
vertex class can be combined to give an overall value for the vertex
class. The strategy on $(G,s^t)$ can then be played allowing the
members of the vertex class to make a guess for the overall value
assigned to the vertex class. This guess will be the same for each
member as they all agreed to use the same strategy and have access
to precisely the same information. Once the guess for the vertex
class is made its value can be decomposed into $t$ values from $A_s$
which can be used as the individual guesses for each of its members.
\end{proof}

Using these results about blowups of digraphs we can show that
in some sense the guessing number is almost monotonically increasing
with respect to the size of the alphabet.

\begin{lemma}\label{Lem:Monotonic}
Given any digraph $G$, positive integer $s$, and real number
$\varepsilon>0$, there exists $t_0(G, s,\varepsilon)>0$ such that for all
integers $t\geq t_0$
\[
\gn(G,t)\geq \gn(G,s)-\varepsilon.
\]
\end{lemma}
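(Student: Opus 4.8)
The plan is to leverage the two blowup lemmas to first establish monotonicity of the guessing number along powers of $s$, and then to transfer an optimal strategy for a large power of $s$ to an arbitrary large alphabet by restriction. Write $n=|V(G)|$ and $p(G,r)=\max_{\mc{F}}\mb{P}[\Win(G,r,\mc{F})]$, so that $p(G,r)=r^{\gn(G,r)-n}$.

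First I would combine Lemmas \ref{Lem:BlowupIneq} and \ref{Lem:BlowupEqual}: since $k\,\gn(G,s^k)=\gn(G(k),s)\geq k\,\gn(G,s)$, dividing by $k$ gives $\gn(G,s^k)\geq \gn(G,s)$ for every integer $k\geq 1$. Thus along the alphabet sizes $s,s^2,s^3,\ldots$ the guessing number never drops below $\gn(G,s)$, which is the same as saying $p(G,s^k)\geq (s^k)^{\gn(G,s)-n}$.

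Next, for an arbitrary alphabet size $t$, set $m=s^k$ where $k=\lfloor \log_s t\rfloor$, so that $m\leq t$, and identify $A_m$ with the subset $\{0,\ldots,m-1\}$ of $A_t$. I would build a strategy $\mc{F}'$ for $(G,t)$ from an optimal strategy $\mc{F}$ for $(G,m)$ as follows: each player $v$ inspects the values of its in-neighbours, and if all of them lie in $A_m$ it applies the corresponding function of $\mc{F}$, while otherwise it guesses arbitrarily. Conditioned on the event that every assigned value lies in $A_m$ --- an event of probability $(m/t)^n$ on which the values are uniform on $A_m^n$ --- every player sees only values in $A_m$ and hence faithfully reproduces the play of $\mc{F}$, so the team wins with conditional probability $p(G,m)$. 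This yields $p(G,t)\geq (m/t)^n p(G,m)$. Substituting the bound of the previous step and simplifying ($(m/t)^n m^{\gn(G,s)-n}=m^{\gn(G,s)}t^{-n}$) gives $\gn(G,t)\geq \gn(G,s)\,\log_t m$.

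Finally I would note that $s^k\leq t<s^{k+1}$ forces $\log_t m>k/(k+1)$, whence $\gn(G,t)\geq \tfrac{k}{k+1}\gn(G,s)$. Since $\gn(G,s)\geq 0$ (random guessing already gives $p\geq s^{-n}$) and $k\to\infty$ as $t\to\infty$, the right-hand side tends to $\gn(G,s)$; choosing $t_0$ large enough that $k=\lfloor\log_s t\rfloor$ satisfies $\gn(G,s)/(k+1)\leq\varepsilon$ completes the argument. The main obstacle is the correctness of the restriction strategy in the third paragraph: one must check that a player seeing only in-neighbour values in $A_m$ genuinely mimics $\mc{F}$ even though it cannot verify that its own unseen value lies in $A_m$, and that the conditioning bookkeeping really identifies the conditional winning probability with $p(G,m)$. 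Everything else is routine estimation.
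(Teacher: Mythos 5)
Your proposal is correct and is essentially the paper's own proof: both arguments restrict play on $(G,t)$ to the sub-alphabet $\{0,1,\ldots,s^k-1\}$ with $k=\lfloor\log_s t\rfloor$ to obtain $\max_{\mc{F}}\mb{P}[\Win(G,t,\mc{F})]\geq (s^k/t)^{|V(G)|}\max_{\mc{F}}\mb{P}[\Win(G,s^k,\mc{F})]$, and both combine Lemmas \ref{Lem:BlowupIneq} and \ref{Lem:BlowupEqual} to get $\gn(G,s^k)\geq\gn(G,s)$ before letting $t\to\infty$. The only differences are cosmetic (you spell out the restriction strategy and the harmless conditioning on the event that all values lie in $A_{s^k}$, and you use the bound $k/(k+1)$ in place of the paper's $k/\log_s t$), so there is nothing to fix.
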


\begin{proof}
We will prove the result by showing that
\begin{align}\label{Eq:AlmostIncLog}
\gn(G,t)\geq\frac{\lfloor \log_st\rfloor}{\log_st}\gn(G,s)
\end{align}
holds for all $t\geq s$. This will be sufficient since as $t$
increases the right hand side of (\ref{Eq:AlmostIncLog}) tends to
$\gn(G,s)$.

We will prove (\ref{Eq:AlmostIncLog}) by constructing a strategy for
$(G,t)$. Let $k=\lfloor \log_st\rfloor$ and note that $s^k$ is at
most $t$. By considering only strategies in which every player is
restricted to guess a value in $\{0,1,\ldots,s^k-1\}$ we get
\[
\max_{\mc{F}}\mb{P}[\Win(G,t,\mc{F})] \geq \mb{P}[a_v<s^k \mbox{ for
all } v\in V(G)]\max_{\mc{F}}\mb{P}[\Win(G,s^k,\mc{F})].
\]
Hence
\[
\frac{t^{\gn(G,t)}}{t^{|V(G)|}}\geq
\left(\frac{s^k}{t}\right)^{|V(G)|}
\frac{s^{k\gn(G,s^k)}}{s^{k|V(G)|}}
\]
which rearranges to
\begin{align}\label{Eq:AlmostIncreasing}
\gn(G,t)\geq\frac{k}{\log_st}\gn(G,s^k).
\end{align}
From Lemmas \ref{Lem:BlowupIneq} and \ref{Lem:BlowupEqual} we can
show $\gn(G,s^k)\geq\gn(G,s)$ which together with
(\ref{Eq:AlmostIncreasing}) completes the proof of
(\ref{Eq:AlmostIncLog}).
\end{proof}

\begin{theorem}
For any digraph $G$, $\displaystyle{\gn(G) = \lim_{s\to\infty} \gn(G,s)}$ exists.
\end{theorem}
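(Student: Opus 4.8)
The plan is to show that the sequence $(\gn(G,s))_{s\geq 2}$ is bounded and that its limit inferior and limit superior coincide. First I would establish boundedness. Since $\mb{P}[\Win(G,s,\mc{F})]\leq 1$ for every strategy $\mc{F}$, we have $\log_s\big(\max_{\mc{F}}\mb{P}[\Win(G,s,\mc{F})]\big)\leq 0$, and hence $\gn(G,s)\leq |V(G)|$ for all $s\geq 2$. The lower bound $\gn(G,s)\geq 0$ follows from the random-guessing strategy, which wins with probability at least $s^{-|V(G)|}$. Consequently the quantity $L:=\limsup_{s\to\infty}\gn(G,s)$ exists and lies in $[0,|V(G)|]$.

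The heart of the argument is to show $\liminf_{s\to\infty}\gn(G,s)\geq L$, after which the limit exists and equals $L$. I would fix $\varepsilon>0$. By the definition of $\limsup$ there is some alphabet size $s$ with $\gn(G,s)>L-\varepsilon$. Applying Lemma \ref{Lem:Monotonic} to this particular $s$ and this $\varepsilon$ furnishes a threshold $t_0(G,s,\varepsilon)$ such that $\gn(G,t)\geq \gn(G,s)-\varepsilon>L-2\varepsilon$ for every integer $t\geq t_0$. Thus every sufficiently large term of the sequence exceeds $L-2\varepsilon$, so $\liminf_{s\to\infty}\gn(G,s)\geq L-2\varepsilon$. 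Letting $\varepsilon\to 0$ gives $\liminf_{s\to\infty}\gn(G,s)\geq L=\limsup_{s\to\infty}\gn(G,s)$, and since the reverse inequality holds automatically, the two agree and the limit exists.

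I do not expect a serious obstacle here, since Lemma \ref{Lem:Monotonic} was tailored precisely to bound the tail of the sequence from below by an arbitrary earlier term. The one subtlety worth flagging is that the monotonicity lemma does not assert genuine monotonicity: $\gn(G,t)$ may well dip below $\gn(G,s)$ for some intermediate values of $t$, so one cannot simply extract a monotone subsequence and invoke the monotone convergence of bounded sequences. Routing the argument through $\limsup$ and $\liminf$, rather than through monotonicity, is what makes it go through cleanly and lets us avoid analysing the fine oscillatory dependence of $\gn(G,s)$ on $s$.
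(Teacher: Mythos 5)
Your proof is correct and takes essentially the same route as the paper's: both arguments identify the target value as the least upper bound of the tail of the sequence (you via $\limsup$, the paper via $\ell = \lim_n \max_{s\le n}\gn(G,s)$, which coincide here), pick an alphabet size $s_0$ whose guessing number is within $\varepsilon$ of that value, and then apply Lemma \ref{Lem:Monotonic} to $s_0$ to force every sufficiently large term above it minus $2\varepsilon$. The only cosmetic difference is your bookkeeping with $\liminf$/$\limsup$ in place of the paper's observation that $\gn(G,s)\leq\ell$ for all $s$.
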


\begin{proof}
By definition $\gn(G,s)\leq |V(G)|$ for all $s$, and $\max_{s\leq
n}\gn(G,s)$ is an increasing sequence with respect to $n$, therefore
its limit exists which we will call $\ell$. Since $\gn(G,s)\leq \ell$ for
all $s$ it will be enough to show that $\gn(G,s)$ converges to $\ell$
from below.

By the definition of $\ell$, given $\varepsilon>0$ there exists
$s_0(\varepsilon)$ such that $\gn(G,s_0(\varepsilon))\geq \ell-\varepsilon$.
From Lemma \ref{Lem:Monotonic} we know that there exists
$t_0(\varepsilon)$ such that for all $t\geq t_0(\varepsilon)$,
$\gn(G,t)\geq \gn(G,s_0(\varepsilon))-\varepsilon$ which implies
$\gn(G,t)\geq \ell-2\varepsilon$ proving we have convergence.
\end{proof}

Before we move on to the next section it is worth mentioning that
for any $s$ the guessing number $\gn(G,s)$ is a lower bound for
$\gn(G)$. This follows immediately from Lemma \ref{Lem:Monotonic}.
Furthermore for any strategy $\mc{F}$ on $(G,s)$ we have
\[
\gn(G,s) \geq |V(G)|+\log_s \mb{P}[\Win(G,s,\mc{F})].
\]
Consequently we can lower bound the asymptotic guessing number by
considering any strategy on any alphabet size.

\section{Lower bounds using the fractional clique
cover}\label{Sec:FractionalCliqueCover}

In this section we will describe a strategy specifically for
undirected graphs. As shown in the previous section this
can be used to provide a lower bound for the asymptotic guessing
number. Christofides and Markstr\"om \cite{Christofides&Markstrom11} conjectured that this bound
always equals the asymptotic guessing number.

In Section \ref{Sec:Definitions} we saw that when an undirected
graph is complete an optimal strategy is
for each player to play assuming the sum of all the values is
congruent to $0\bmod s$ (where $s$ is the alphabet size). We call this the \emph{complete graph strategy}. We can
generalise this strategy to undirected graphs which are not
complete. We simply decompose the undirected graph into vertex
disjoint cliques and then let the players play the complete graph
strategy on each of the cliques. If we are playing on an alphabet of size $s$ and we decompose the graph into $t$ cliques, then on each clique the probability of winning is $s^{-1}$ and so the probability of winning the guessing game, which is equal to the probability of winning in each of the cliques, is $s^{-t}$. Clearly the probability of
winning is higher if we choose to decompose the graph into as few
cliques as possible. The smallest number of cliques that we can decompose a graph into is called the \emph{minimum clique cover number} of $G$ and we will represent it by $\kappa(G)$. In this notation we have
\[\gn(G)\geq \gn(G,s)\geq |V(G)|-\kappa(G).\]
It is worth mentioning that finding
the minimum clique cover number of a graph is equivalent to finding
the chromatic number of the graph's complement. As such it is difficult to determine this number in the sense that the computation of the chromatic number of a graph is an NP-complete problem \cite{Karp72}.

We can improve this bound further by considering blowups of
$G$. From Lemma \ref{Lem:BlowupEqual} we know that $\gn(G,s^t) =
\gn(G(t),s)/t$, hence by the clique cover strategy on $G(t)$ we get
a lower bound of $|V(G)|-\kappa(G(t))/t$. The question is now to
determine $\min_t\kappa(G(t))/t$. We do this by looking at the fractional clique cover of $G$.

Let $K(G)$ be the set of all cliques in $G$, and let $K(G,v)$ be the
set of all cliques containing vertex $v$. A \emph{fractional
clique cover} of $G$ is a weighting $w:K(G)\to [0,1]$ such that for
all $v\in V(G)$
\[
\sum_{k\in K(G,v)} w(k) \geq 1.
\]
The minimum value of $\sum_{k\in K(G)}w(k)$ over all choices of fractional clique covers $w$ is known as \emph{the
fractional clique cover number} which we will denote by
$\kappa_f(G)$. (Although we do not define it here, we point out that the fractional clique cover number of a graph
is equal to the fractional chromatic number of its complement.)

For the purposes of guessing game strategies it will be more convenient to instead consider a special type of fractional clique cover called the regular fractional clique cover. A \emph{regular fractional clique cover} of $G$ is a weighting $w:K(G)\to [0,1]$ such that for
all $v\in V(G)$
\[
\sum_{k\in K(G,v)} w(k) = 1.
\]
The minimum value of $\sum_{k\in K(G)}w(k)$ over all choices of regular fractional clique covers $w$ can be shown to be equal to the fractional clique cover number  $\kappa_f(G)$. To see this, observe firstly that since all regular fractional clique covers are fractional clique covers the minimum value of $\sum_{k\in K(G)}w(k)$ over all choices of regular fractional clique covers $w$ is at least $\kappa_f(G)$. Finally, to show it is at most $\kappa_f(G)$ we simply observe that the optimal fractional clique cover can be made into a regular fractional cover by moving weights from larger cliques to smaller cliques. In particular, given a vertex $v$ for which $\sum_{k\in K(G,v)} w(k) > 1$ we pick a clique $k_1 \in K(G,v)$ with $w(k_1) > 0$ and proceed as follows: We change the weight of $k_1$ from $w(k_1)$ to 
\[w'(k_1) = \max\left\{0,1- \sum_{\stackrel{k\in K(G,v)}{k \neq k_1}} w(k)\right\} < w(k_1)\]
We also change the weight of the clique $k_1' = k_1 \setminus \{v\}$ from $w(k_1')$ to $w'(k_1') = w(k_1') + w(k_1) - w'(k_1)$. We leave the weight of all other vertices the same. In this way, the total sum of weights over all cliques remains the same, the total sum of weights over all cliques containing a given vertex $v' \neq v$ also remains the same, but the total sum of weights over all cliques containing $v$ is reduced. This process has to terminate because whenever we change the weight of $k_1$ it will either become equal to $0$ or the total sum of weight of all cliques containing $v$ will become equal to $1$. 

Clearly $\kappa_f(G)$ and an optimal regular fractional clique cover $w$ can be determined
by linear programming. Since all the coefficients of the constraints
and objective function are integers, $w(k)$ will be rational for all
$k\in K(G)$ as will $\kappa_f(G)$. If we let $d$ be the common
denominator of all the weights, then $dw(k)$ for $k\in K(G)$
describes a clique cover of $G(d)$. In particular it decomposes
$G(d)$ into $d\kappa_f(G)$ cliques, proving a lower bound of
\begin{equation}\label{Eq:Fractional clique cover lower bound}
\gn(G) \geq |V(G)|-\kappa_f(G).
\end{equation}
We claim that
\[\min_t \frac{\kappa(G(t))}{t} \geq\kappa_f(G)\]
and therefore we cannot hope to use regular fractional clique cover strategies to improve \eqref{Eq:Fractional clique cover lower bound}.
 To prove our claim we begin by observing that for all $t$ we have $\kappa(G(t))\geq \kappa_f(G(t))$. This is immediate as a minimal clique cover is a special type of fractional clique cover, namely one where all weights are $0$ or $1$. Hence it is enough to show that
\[\kappa_f(G(t))=t\kappa_f(G).\]
%
%
%
%
This can be proved simply from observing that an optimal weighting of $K(G(t))$ can always be transformed into another optimal weighting which is symmetric with respect to vertices in the same vertex class. This can be done just by moving the weights between cliques. Therefore determining $\kappa_f(G(t))$
is equivalent to determining $\kappa_f(G)$ but with the constraints
$\sum_{k\in K(G,v)}w(k)=t$ rather than $1$. The result
$\kappa_f(G(t))=t\kappa_f(G)$ is a simple consequence of this.

A useful bound on $\kappa_f(G)$ which we will make use of later is
given by the following lemma.

\begin{lemma}\label{Lem:BoundOnFractionalClique}
For any undirected graph $G$
\[
\kappa_f(G)\geq \frac{|V(G)|}{\omega(G)},
\]
where $\omega(G)$ is the number of vertices in a maximum clique in
$G$.
\end{lemma}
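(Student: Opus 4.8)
The plan is to prove the lower bound $\kappa_f(G) \geq |V(G)|/\omega(G)$ directly from the definition of the fractional clique cover number by using the defining constraints together with the observation that every clique has size at most $\omega(G)$. Let $w : K(G) \to [0,1]$ be any fractional clique cover of $G$. I want to show that $\sum_{k \in K(G)} w(k) \geq |V(G)|/\omega(G)$, since taking the minimum over all such $w$ then yields the claim.

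First I would sum the defining constraints over all vertices. Since $\sum_{k \in K(G,v)} w(k) \geq 1$ holds for each $v \in V(G)$, adding these $|V(G)|$ inequalities gives
\[
\sum_{v \in V(G)} \sum_{k \in K(G,v)} w(k) \geq |V(G)|.
\]
The key step is to re-express the double sum on the left by switching the order of summation. A clique $k$ appears in the inner sum for vertex $v$ precisely when $v \in k$, so each weight $w(k)$ is counted once for each vertex it contains, i.e.\ exactly $|k|$ times. Hence the left-hand side equals $\sum_{k \in K(G)} |k|\, w(k)$. Since every clique satisfies $|k| \leq \omega(G)$ and all weights are non-negative, we may bound
\[
\sum_{k \in K(G)} |k|\, w(k) \leq \omega(G) \sum_{k \in K(G)} w(k).
\]
Combining the two displayed lines gives $\omega(G) \sum_{k \in K(G)} w(k) \geq |V(G)|$, and dividing by $\omega(G)$ yields $\sum_{k \in K(G)} w(k) \geq |V(G)|/\omega(G)$.

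Taking the infimum over all fractional clique covers $w$ then gives $\kappa_f(G) \geq |V(G)|/\omega(G)$, as required. The argument is essentially a double-counting estimate, so I do not expect any serious obstacle; the only point requiring a little care is the bookkeeping in the interchange of summation, where one must note that $w(k)$ is weighted by the number of vertices in $k$ rather than merely summed once. The bound $|k| \leq \omega(G)$ is then exactly the input that converts this vertex-weighted sum into a plain sum of weights, which is what the objective function of the fractional clique cover measures.
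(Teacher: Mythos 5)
Your proof is correct and follows essentially the same double-counting argument as the paper: sum the vertex constraints, interchange the order of summation so each $w(k)$ is counted $|k|$ times, and bound $|k|\leq\omega(G)$. The only cosmetic difference is that the paper applies this to an optimal \emph{regular} fractional clique cover (where the constraints hold with equality), whereas you work with an arbitrary fractional clique cover and the inequality constraints, which works equally well and avoids invoking the regularisation step.
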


\begin{proof}
Let $w$ be an optimal regular fractional clique cover. Since
\[\sum_{k\in K(G,v)} w(k) = 1\] 
holds for all $v\in V(G)$, summing
both sides over $v$ gives us,
\[
\sum_{k\in K(G)} w(k)|V(k)| = |V(G)|,
\]
where $|V(k)|$ is the number of vertices in clique $k$. The result
trivially follows from observing
\[
\sum_{k\in K(G)} w(k)|V(k)| \leq \sum_{k\in K(G)} w(k)\omega(G) =
\kappa_f(G) \omega(G). \qedhere
\]
\end{proof}

The result of Christofides and Markstr\"om \cite{Christofides&Markstrom11} states the following:

\begin{theorem}\label{Thm:LowerBoundIneq}
If $G$ is an undirected graph then
\[\gn(G)\geq |V(G)|-\kappa_f(G).\]
\end{theorem}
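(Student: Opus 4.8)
The plan is to realise the fractional clique cover as an honest integral clique cover on a blowup of $G$, and then transfer the resulting lower bound back to $G$ via Lemma \ref{Lem:BlowupEqual}. The starting point is an optimal regular fractional clique cover $w : K(G) \to [0,1]$, which achieves $\sum_{k} w(k) = \kappa_f(G)$ and satisfies $\sum_{k \in K(G,v)} w(k) = 1$ for every vertex $v$. Since this value and weighting arise from a linear program with integer data, all the weights $w(k)$ are rational; I would let $d$ be their common denominator so that each $dw(k)$ is a nonnegative integer.

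First I would check that the integer weights $dw(k)$ assemble into a clique cover of the blowup $G(d)$ using exactly $\sum_k dw(k) = d\kappa_f(G)$ cliques. For each clique $k$ I introduce $dw(k)$ labelled copies. Fixing a vertex $v$, the cliques through $v$ carry total weight $\sum_{k \in K(G,v)} dw(k) = d$, which is exactly the number of copies $v_1,\dots,v_d$ of $v$ in $G(d)$; I would therefore choose, for each vertex $v$ independently, a bijection between the clique-copies through $v$ and the indices $\{1,\dots,d\}$. Assigning to each clique-copy the index selected for each of its vertices turns it into a genuine clique of $G(d)$, because $uv \in E(G)$ forces $u_iv_j \in E(G(d))$ for all $i,j$; and each $v_i$ is covered exactly once. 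This yields $\kappa(G(d)) \le d\kappa_f(G)$.

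Next I would apply the clique cover strategy to $G(d)$. Since $|V(G(d))| = d|V(G)|$, the bound $\gn(G(d),s) \ge |V(G(d))| - \kappa(G(d))$ gives $\gn(G(d),s) \ge d|V(G)| - d\kappa_f(G) = d\bigl(|V(G)| - \kappa_f(G)\bigr)$ for every alphabet size $s$. By Lemma \ref{Lem:BlowupEqual} we have $\gn(G(d),s) = d\,\gn(G,s^d)$, so dividing by $d$ yields $\gn(G,s^d) \ge |V(G)| - \kappa_f(G)$. Finally, since $\gn(G,t) \le \gn(G)$ for every alphabet size $t$ (as noted at the end of Section \ref{Sec:Asymptotic}), taking $t = s^d$ gives $\gn(G) \ge |V(G)| - \kappa_f(G)$, which is the claim.

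I expect the main obstacle to be the middle step: verifying carefully that the fractional weights really do lift to a vertex-disjoint family of cliques covering every vertex of $G(d)$. The counting is forced by regularity of the cover (each vertex class has exactly $d$ copies and the incident clique weights sum to $d$), but one must be explicit that the index-assignments chosen independently for different vertices remain mutually compatible, which works precisely because any choice of indices from two adjacent classes is adjacent in the blowup.
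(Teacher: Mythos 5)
Your proof is correct and follows essentially the same route as the paper: rationality of the optimal regular fractional clique cover, lifting the integer weights $dw(k)$ to a clique cover of the blowup $G(d)$, applying the clique cover strategy there, and transferring back via Lemma \ref{Lem:BlowupEqual} and the remark that $\gn(G,t)\leq\gn(G)$ for every alphabet size. The only difference is one of detail: the paper asserts in one line that $dw(k)$ ``describes a clique cover of $G(d)$,'' whereas you make the vertex-by-vertex bijection between clique-copies and blowup indices explicit, which is a welcome (and correct) elaboration rather than a departure.
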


In \cite{Christofides&Markstrom11} it was proved that the above lower bound is actually an equality for various families of undirected graphs including perfect graphs, odd cycles and complements of odd cycles. This led Christofides and Markstr\"om \cite{Christofides&Markstrom11} to conjecture that we always have equality.

\begin{conjecture} \label{Conj:LowerBoundSharp}
If $G$ is an undirected graph then
\[\gn(G) = |V(G)|-\kappa_f(G).\]
\end{conjecture}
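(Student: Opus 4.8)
The lower bound $\gn(G) \geq |V(G)| - \kappa_f(G)$ is already in hand as Theorem \ref{Thm:LowerBoundIneq} (equivalently \eqref{Eq:Fractional clique cover lower bound}), so to establish Conjecture \ref{Conj:LowerBoundSharp} the entire task is to prove the matching upper bound $\gn(G) \leq |V(G)| - \kappa_f(G)$. The first thing I would do is recast the guessing number in a purely combinatorial form. For a fixed pure strategy $\mc{F} = (f_v)_{v \in V(G)}$ on $(G,s)$, the event $\Win(G,s,\mc{F})$ holds exactly for those assignments $a \in A_s^{V(G)}$ satisfying $a_v = f_v\big((a_u)_{u \in \Gamma^-(v)}\big)$ for every $v$; writing $W$ for this set of simultaneously consistent assignments, we have $\mb{P}[\Win(G,s,\mc{F})] = |W|/s^{|V(G)|}$ and hence $\gn(G,s) = \max_{\mc{F}} \log_s |W|$. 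The object to bound from above is therefore the largest possible ``fixed-point code'' $W \subseteq A_s^{V(G)}$ in which each coordinate $a_v$ is a function of the coordinates indexed by $\Gamma^-(v)$.

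For the upper bound I would pass to information theory, anticipating Section \ref{Sec:Entropy}. Let $X = (X_v)_{v \in V(G)}$ be uniform on an optimal $W$ and normalise entropy so that $\log s = 1$. Then $H(X) = \log_s|W| = \gn(G,s)$, each marginal satisfies $H(X_v) \leq 1$, and the defining property of $W$ gives $H\big(X_v \mid X_{\Gamma^-(v)}\big) = 0$ for every $v$. Maximising $H(X_{V(G)})$ over all entropy functions obeying these constraints together with the Shannon inequalities (non-negativity, monotonicity and submodularity) produces a linear program whose optimum is an upper bound for $\gn(G)$. The plan would then be to show this Shannon optimum equals $|V(G)| - \kappa_f(G)$, ideally by exhibiting a feasible dual solution built directly from an optimal regular fractional clique cover: each clique $k$ in the cover would contribute one submodularity (``clique'') inequality weighted by $w(k)$, and a Lemma \ref{Lem:BoundOnFractionalClique}-style counting argument would be used to collapse the weighted sum to the desired bound.

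The main obstacle is precisely this last matching step, and I expect it to be where the argument breaks rather than where it closes. The fractional clique cover strategy corresponds to only one rather special family of codes $W$ --- essentially sums of independent parity constraints, one per clique --- so there is no a priori reason the optimal $W$ should be of this form. More pointedly, the Shannon linear program above may have optimum strictly larger than $|V(G)| - \kappa_f(G)$: a clique decomposition only ever exploits local (pairwise-complete) structure, whereas a strategy that is linear over a larger field, or non-linear, can couple vertices across the whole graph in the manner familiar from index coding. For this reason I would treat a failed attempt at the upper bound not as a dead end but as a search for a counterexample --- a graph carrying an explicit global strategy whose consistent set $W$ is larger than any clique-cover construction allows, forcing $\gn(G) > |V(G)| - \kappa_f(G)$ and refuting Conjecture \ref{Conj:LowerBoundSharp}.
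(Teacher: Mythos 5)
Your proposal does not (and cannot) prove the statement: Conjecture \ref{Conj:LowerBoundSharp} is false, and the paper's own resolution of it is a disproof by explicit counterexample. To your credit, the final paragraph of your proposal lands on exactly the right orientation --- you correctly suspect that the Shannon linear program can exceed $|V(G)|-\kappa_f(G)$, that the clique-cover strategy only exploits local structure, and that a ``global'' linear strategy coupling vertices across the whole graph could beat it. All three suspicions are validated by the paper: the graphs $R$ and $R^-$ (found by exhaustive search on graphs up to $10$ vertices) have Shannon bound strictly above the clique cover bound, and the counterexample mechanism is precisely a linear strategy over $GF(3)$ whose constraints are not supported on cliques. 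Your intermediate plan --- proving the Shannon optimum equals $|V(G)|-\kappa_f(G)$ via a dual solution built from a regular fractional clique cover --- would necessarily fail for these graphs, as you anticipated.

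The genuine gap is that the proposal stops exactly where the substance begins: ``search for a counterexample'' is not a counterexample. The paper's disproof (Theorem \ref{Thm:Rc}) requires naming a concrete graph and a concrete strategy, and neither is producible by the general reasoning in your proposal; the paper found $R$ only by checking roughly $12$ million graphs, of which just two exhibited a gap between the bounds. The actual counterexample is $R_c$, obtained from $R$ by cloning vertices $8,9,10$ (so $|V(R_c)|=13$). Since the largest clique of $R_c$ is a triangle, Lemma \ref{Lem:BoundOnFractionalClique} gives $\kappa_f(R_c)\geq 13/3$, and an explicit cover shows equality, so the clique cover bound is $26/3$. Against this, the paper exhibits a strategy on alphabet $A_3$ defined by four linearly independent parity conditions modulo $3$ --- conditions \eqref{Eq:Rc1}--\eqref{Eq:Rc4}, each spanning several cliques, with each player's guess recovered from a suitable linear combination of the conditions --- which wins with probability $3^{-4}$, giving $\gn(R_c)\geq 13-4 = 9 > 26/3$. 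Note also that refuting the conjecture needs no entropy upper bound at all (your Shannon-LP machinery is only used in the paper to pin down $\gn(R_c)=9$ exactly); the refutation itself rests entirely on the lower-bound strategy plus the elementary bound $\kappa_f(R_c)\geq |V(R_c)|/\omega(R_c)$, and is human-verifiable.
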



To prove or disprove such a claim we require a way of upper bounding $\gn(G)$. This is the purpose of the next section.

\section{Upper bounds using entropy} \label{Sec:Entropy}

Recall that it is sufficient to only consider pure strategies on
guessing games. Hence given a strategy $\mc{F}$ on a guessing game
$(G,s)$ we can explicitly determine $\mc{S}(\mc{F})$ the set of all
assignment tuples $(a_v)_{v\in V(G)}$ that result in the players
winning given they are playing strategy $\mc{F}$. In this context
the players' objective is to choose a strategy that maximizes
$|\mc{S}(\mc{F})|$. We have
\begin{align*}
\gn(G,s) &= |V(G)| + \log_s\left(\max_{\mc{F}}\mb{P}[\Win(G,s,\mc{F})]\right)\\
&= |V(G)| + \max_{\mc{F}}\log_s \frac{|\mc{S}(\mc{F})|}{s^{|V(G)|}} \\
&= \max_{\mc{F}}\log_s|\mc{S}(\mc{F})|.
\end{align*}

Consider the probability space on the set of all assignment tuples
$A_s^{|V(G)|}$ with the members in $\mc{S}(\mc{F})$ occurring with
uniform probability and all other assignments occurring with $0$
probability. For each $v\in V(G)$ we define the discrete random
variable $X_v$ on this probability space to be the value assigned to
vertex $v$. The \emph{s-entropy} of a discrete random variable $X$
with outcomes $x_1,x_2,\ldots,x_n$ is defined as
\[
H_s(X) = -\sum_{i=1}^n \mb{P}[X=x_i]\log_s\mb{P}[X=x_i],
\]
where we take $0\log_s 0$ to be $0$ for consistency. Note that
traditionally entropy is defined using base $2$ logarithms, however
it will be more convenient for us to work with base $s$ logarithms. We will usually write $H(X)$ instead of $H_s(x)$. We will mention here all basic results concerning entropy that we are going to use. For more information, we refer the reader to \cite{Cover&Thomas}.

Given a set of random variables $Y_1,\ldots, Y_n$ with sets of
outcomes $\Image(Y_1),\ldots, \Image(Y_n)$ respectively, the \emph{joint
entropy} $H(Y_1,\ldots,Y_n)$ is defined as
\[
-\sum_{y_1\in\Image(Y_1)} \cdots \sum_{y_n\in\Image(Y_n)}
\mb{P}[Y_1=y_1,\ldots,Y_n=y_n]\log_s\mb{P}[Y_1=y_1,\ldots,Y_n=y_n].
\]
Given a set of random variables $Y = \{Y_1,\ldots,Y_n\}$ we will
also use the notation $H(Y)$ to represent the joint entropy
$H(Y_1,\ldots,Y_n)$. Furthermore for sets of random variables $Y$
and $Z$ we will use the notation $H(Y,Z)$ as shorthand for $H(Y\cup
Z)$. For completeness we also define $H(\emptyset)=0$.

Observe that under these definitions, the joint entropy of the set
of variables $X_G = \{X_v : v\in V(G)\}$ is
\begin{align*}
H(X_G) &= -\sum_{(a_v)\in\mc{S}(\mc{F})}
\frac{1}{|\mc{S}(\mc{F})|}\log_s\left(\frac{1}{|\mc{S}(\mc{F})|}\right)\\
&= \log_s |\mc{S}(\mc{F})|
\end{align*}
Therefore by upper bounding $H(X_G)$ for all choices of $\mc{F}$ we
can upper bound $\gn(G,s)$.

We begin by stating some inequalities that most hold regardless of
$\mc{F}$.

\begin{theorem}\label{Thm:ShannonEntropy}
Given $X,Y,Z\subset X_G$,
\begin{enumerate}
\item\label{Item:VarLowerBound} $H(X)\geq 0$.
\item\label{Item:VarUpperBound} $H(X)\leq |X|$.
\item\label{Item:VarShannon} Shannon's information inequality:
\[
H(X,Z)+H(Y,Z)-H(X,Y,Z)-H(Z)\geq 0.
\]
\item\label{Item:VarGraphConstraint} Suppose $A, B\subset V(G)$ with $\Gamma^-(u) \subset
B$ for all $u\in A$. Let $X=\{X_v : v\in A\}$ and $Y=\{X_v : v\in B\}$. Then
\[H(X,Y) = H(Y).\] 
\end{enumerate}
\end{theorem}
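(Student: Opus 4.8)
The plan is to prove each of the four parts of Theorem~\ref{Thm:ShannonEntropy} by unpacking the definitions of entropy applied to the specific probability space we constructed, namely the uniform distribution on $\mc{S}(\mc{F})$. The key observation underlying everything is that each random variable $X_v$ takes values in the finite alphabet $A_s$ of size $s$, so $X$ (a set of at most $|V(G)|$ such variables) has at most $s^{|X|}$ possible outcomes, and with our choice of base-$s$ logarithms the bounds come out clean.

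For part~\ref{Item:VarLowerBound}, I would note that each summand $-\mb{P}[X=x_i]\log_s\mb{P}[X=x_i]$ is nonnegative because $\mb{P}[X=x_i]\in[0,1]$ forces $\log_s\mb{P}[X=x_i]\leq 0$; summing nonnegative terms gives $H(X)\geq 0$. For part~\ref{Item:VarUpperBound}, the joint variable $X$ ranges over a set of at most $s^{|X|}$ outcomes, and the entropy of any random variable on $N$ outcomes is maximised by the uniform distribution, attaining value $\log_s N$; hence $H(X)\leq \log_s s^{|X|}=|X|$. The maximum-entropy fact itself is a standard consequence of Jensen's inequality applied to the concave function $\log_s$, and since it is a basic result about entropy I would simply cite \cite{Cover&Thomas} rather than reprove it. Part~\ref{Item:VarShannon} is Shannon's submodularity inequality: writing it in terms of conditional mutual information, $H(X,Z)+H(Y,Z)-H(X,Y,Z)-H(Z)=I(X;Y\mid Z)\geq 0$, which again is a standard entropy inequality I would attribute to \cite{Cover&Thomas}.

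The only part genuinely tied to the guessing-game structure, and hence the step I expect to be the main obstacle to phrase carefully, is part~\ref{Item:VarGraphConstraint}. Here the content is that $H(X,Y)=H(Y)$, equivalently $H(X\mid Y)=0$, i.e.\ the values on $A$ are completely determined by the values on $B$. The reason is that we are working with a \emph{pure} strategy: for every $u\in A$, the guess $f_u$ is a deterministic function of the values of the in-neighbours $\Gamma^-(u)$, and on the winning set $\mc{S}(\mc{F})$ the actual value $a_u$ equals this guess. Since $\Gamma^-(u)\subset B$, the value $a_u$ is thus a fixed function of the coordinates indexed by $B$. Therefore on the support of our distribution the entire tuple $(a_v)_{v\in A}$ is determined by $(a_v)_{v\in B}$.

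To turn this into the entropy identity I would use the chain-rule decomposition $H(X,Y)=H(Y)+H(X\mid Y)$ and argue that $H(X\mid Y)=0$. Concretely, for each fixed assignment $(a_v)_{v\in B}$ occurring with positive probability, the conditional distribution of $X$ is supported on the single point determined by the functions $(f_u)_{u\in A}$, so the conditional entropy vanishes; averaging over $Y$ preserves this. Equivalently, and perhaps most transparently for the write-up, I would observe that the joint outcomes $(a_v)_{v\in A\cup B}$ appearing in $\mc{S}(\mc{F})$ are in bijection with their projections $(a_v)_{v\in B}$, so the two joint variables induce identical distributions up to this relabelling and therefore have equal entropy. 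The care needed is just to make explicit that "pure strategy" plus "restricted to the winning set" is exactly what forces the functional dependence; once that is said, the identity $H(X,Y)=H(Y)$ is immediate.
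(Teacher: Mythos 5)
Your proposal is correct and takes essentially the same approach as the paper: parts (\ref{Item:VarLowerBound})--(\ref{Item:VarShannon}) come from the definition and Jensen's inequality (which the paper carries out explicitly where you instead cite \cite{Cover&Thomas}), and part (\ref{Item:VarGraphConstraint}) rests on the identical key observation that a pure strategy restricted to the winning set $\mc{S}(\mc{F})$ makes the values on $A$ a deterministic function of the values on $B$. The only cosmetic difference is that you phrase part (\ref{Item:VarGraphConstraint}) via the chain rule $H(X,Y)=H(Y)+H(X\mid Y)$ with vanishing conditional entropy, whereas the paper sums directly over the definition of $H(X,Y)$ using the fact that $\mb{P}[X,Y]$ is either $0$ or $\mb{P}[Y]$; these are the same argument.
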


\begin{proof} $ $
\begin{enumerate}
\item Property \ref{Item:VarLowerBound} follows immediately from the
definition of entropy.

\item Property \ref{Item:VarUpperBound} follows from first observing that
$H(X) = \mb{E}[\log_s(1/\mb{P}[X])]$. Since the function $x \mapsto \log_s(x)$ is concave, by Jensen's inequality we get that \[H(X)\leq \log_s
\mb{E}[1/\mb{P}[X]] = \log_s |\Image(X)|\] where $\Image(X)$ is the
set of outcomes for $X$. Since $\Image(X)=A_s^{|X|}$ we have the
desired inequality $H(X)\leq |X|$.

\item Property \ref{Item:VarShannon} again follows from Jensen's
inequality. First we observe that
\begin{align*}
H(X,Z)+H(Y,Z)-H(X,Y,Z)-H(Z) =
\mb{E}_{X,Y,Z}\left[-\log_s\left(\frac{\mb{P}[X,Z]\mb{P}[Y,Z]}{\mb{P}[X,Y,Z]\mb{P}[Z]}\right)\right].
\end{align*}
By an application of Jensen's inequality this is at least
\begin{align*}
-\log_s\left(\mb{E}_{X,Y,Z}\left[\frac{\mb{P}[X,Z]\mb{P}[Y,Z]}{\mb{P}[X,Y,Z]\mb{P}[Z]}\right]\right)
&=
-\log_s\left(\sum_{X,Y,Z}\frac{\mb{P}[X,Z]\mb{P}[Y,Z]}{\mb{P}[Z]}\right)
= 0.
\end{align*}

\item Property \ref{Item:VarGraphConstraint} is a simple consequence of the fact that
the values assigned to the vertices in $A$ are completely
determined by the values assigned to the vertices in $B$. Since
$\mb{P}[X,Y]$ is either $0$ or $\mb{P}[Y]$, the result is trivially
attained by considering the definition of $H(X,Y)$ and summing over
the variables in $X$. \qedhere
\end{enumerate}
\end{proof}

From Theorem \ref{Thm:ShannonEntropy} we can form a linear program
to upper bound $H(X_G)$. In particular the linear program consists
of $2^{|V(G)|}$ variables corresponding to the values of $H(X)$ for
each $X\subset X_G$. The variables are constrained by the linear
inequalities given in Theorem \ref{Thm:ShannonEntropy} and the
objective is to maximize the value of the variable corresponding to
$H(X_G)$. We call the result of the optimization \emph{the Shannon
bound} of $G$ and denote it by $\Sh(G)$.

Note that $\Sh(G)$ can be calculated without
making any explicit use of $\mc{F}$ or $s$. Hence it is not only an
upper bound on $\gn(G,s)$ but also on $\gn(G)$.

More recently information entropy inequalities that cannot be
derived from linear combinations of Shannon's inequality (Property
\ref{Item:VarShannon} in Theorem \ref{Thm:ShannonEntropy}) have been
discovered. The first such inequality was found by Zhang and Yeung \cite{Zheng&Yeung98}.
The \emph{Zhang-Yeung inequality} states that
\begin{multline*}
-2H(A)-2H(B)-H(C)+3H(A,B)+3H(A,C)+H(A,D)+\\3H(B,C)+H(B,D)-H(C,D)-4H(A,B,C)-H(A,B,D)
\geq 0
\end{multline*}
for sets of random variables $A,B,C,D$. By setting $A=X\cup Z$,
$B=Z$, $C=Y\cup Z$, $D=Z$, the Zhang-Yeung inequality reduces to
Shannon's inequality. By replacing the Shannon inequality
constraints with those given by the Zhang-Yeung inequality we can
potentially get a better upper bound from the linear program.
However, we pay for this potentially better bound by a significant
increase in the running time of the linear program. We will call the bound
on $\gn(G)$ obtained by use of the Zhang-Yeung inequality \emph{the
Zhang-Yeung bound} and denote it by $\ZY(G)$.

In fact there are known to be infinite families of non-Shannon
inequalities even on $4$ variables. We cannot hope to add infinite
constraints to the linear program so instead we will consider the
$214$ inequalities given by Dougherty, Freiling, and
Zeger \cite[Section VIII]{DFZ}. We will refer to the resulting bound as
\emph{the Dougherty-Freiling-Zeger bound} and denote it by $\DFZ(G)$. It is perhaps worth
mentioning for those interested that the $214$
Dougherty-Freiling-Zeger inequalities imply the Zhang-Yeung
inequality (simply sum inequalities $56$ and $90$) and therefore they
also imply Shannon's inequality.

The final bound we will consider is \emph{the Ingleton bound} which we will denote by $\Ingl(G)$. This is obtained when we replace the {S}hannon inequality constraints with the \emph{Ingleton inequality}
\begin{multline*}
-H(A)-H(B)+H(A,B)+H(A,C)+H(A,D)+H(B,C)+\\H(B,D)-H(C,D)-H(A,B,C)-H(A,B,D)\geq 0.
\end{multline*}

The Ingleton inequality provides the outer-bound of the inner-cone of linearly representable entropy vectors \cite{Chan07}. By setting $A=Z,C=Y$ and $B=D=X \cup Z$, the Ingleton inequality reduces to Shannon's inequality.

If each player's strategy can be expressed as a linear combination of the values it sees, then the Ingleton inequality will hold. Therefore the inequality holds for a strategy on $(G,s^t)$ that can be represented as a linear strategy on $(G(t),s)$ (as described in the proof of Lemma \ref{Lem:BlowupEqual}). As such, the Ingleton bound gives us an
upper bound when we restrict ourselves to strategies which are
linear on the digits of the values. An important such strategy is
the fractional clique cover strategy \cite{Christofides&Markstrom11} which leads to the proof of Theorem \ref{Thm:LowerBoundIneq}.


In searching for a counterexample to Conjecture
\ref{Conj:LowerBoundSharp} we carried out an exhaustive search on
all undirected graphs with at most $9$ vertices. We compared the
lower bound given by the fractional clique cover with the upper
bound given by the Shannon bound and in all cases the two bounds
matched. The bounds were calculated using floating point arithmetic
and so we do not claim this search to be rigorous, however it suggested
the following conjecture.

\begin{conjecture}\label{Conj:UpperBoundSharp}
If $G$ is an undirected graph then $\gn(G) = \Sh(G)$. 
\end{conjecture}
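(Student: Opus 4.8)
The plan is to disprove this conjecture rather than to prove it, since the abstract already promises that Shannon's inequalities are insufficient. Recall that $\Sh(G)$ is always an upper bound for $\gn(G)$, so the conjecture really asserts that the Shannon bound is tight for every undirected graph. One clean way to \emph{prove} it — indeed to prove Conjecture \ref{Conj:LowerBoundSharp} simultaneously — would be to establish, purely by linear-programming duality, that $\Sh(G) = |V(G)| - \kappa_f(G)$ for all $G$; then the chain $|V(G)| - \kappa_f(G) \leq \gn(G) \leq \Sh(G)$ would collapse to equality throughout. I do not expect this to hold, however, because the Shannon LP encodes only submodularity (Property \ref{Item:VarShannon} of Theorem \ref{Thm:ShannonEntropy}), and it is a classical phenomenon that submodularity alone does not characterise which vectors are entropic.

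Instead I would look for a single graph $G$ on which a stronger, non-Shannon upper bound strictly beats $\Sh(G)$. The key reduction is that it suffices to exhibit one graph with $\DFZ(G) < \Sh(G)$ (or even $\ZY(G) < \Sh(G)$): since the Dougherty--Freiling--Zeger inequalities are valid for all entropy vectors, we already have $\gn(G) \leq \DFZ(G)$, so a strict gap gives $\gn(G) \leq \DFZ(G) < \Sh(G)$ and hence $\gn(G) \neq \Sh(G)$, with no need to determine $\gn(G)$ exactly. The whole problem therefore collapses to finding a graph whose Shannon LP value is not matched by the LP that also carries the non-Shannon constraints.

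The natural candidate is a graph built from the Vámos matroid, the standard object whose entropy region is separated from the Shannon (polymatroid) cone precisely by Zhang--Yeung-type inequalities. Using the matroid/network correspondence alluded to in the introduction, I would translate the Vámos matroid into a guessing-game graph so that its non-representability surfaces as a gap between the Shannon and non-Shannon linear programs, and then compute $\Sh(G)$ and $\DFZ(G)$ for that graph to verify $\DFZ(G) < \Sh(G)$.

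The main obstacle is twofold. First, one must choose the encoding so that the eight matroid elements correspond to vertex sets whose entropy constraints are exactly the ones the non-Shannon inequalities tighten; this translation is not automatic, and a naive graph may well have $\DFZ(G) = \Sh(G)$. Second, and more seriously, the separation must be made rigorous rather than rest on the floating-point search described above: since both bounds come from linear programs with rational data, the clean way is to solve both LPs exactly, or to exhibit a certified rational dual solution for the DFZ LP whose value is provably strictly below the Shannon optimum, so that the gap is beyond numerical doubt.
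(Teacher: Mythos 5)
Your logical reduction is exactly the one the paper uses: since the Zhang--Yeung and Dougherty--Freiling--Zeger inequalities are valid for all entropy vectors, $\gn(G)\leq\ZY(G)$ and $\gn(G)\leq\DFZ(G)$ hold for every $G$, so any \emph{undirected} graph with $\DFZ(G)<\Sh(G)$ (or $\ZY(G)<\Sh(G)$) refutes Conjecture~\ref{Conj:UpperBoundSharp} without ever determining $\gn(G)$; and your insistence on exact rational LP certificates rather than floating-point output matches the paper's verification method. The paper implements precisely this plan with the $10$-vertex graph $R^-$, certifying $\Sh(R^-)=114/17=6.705\ldots$, $\ZY(R^-)=1212/181=6.696\ldots$ and $\DFZ(R^-)=59767/8929=6.693\ldots$ (Theorem~\ref{Thm:R-Upper}), which gives $\gn(R^-)\leq\DFZ(R^-)<\Sh(R^-)$.

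The genuine gap is that you never produce the graph, and the route you propose to it is the step most likely to fail. The conjecture concerns undirected graphs, but the standard V\'amos constructions (such as the Dougherty--Freiling--Zeger V\'amos network, and the matroid-to-network translation you allude to) yield directed acyclic networks, hence \emph{directed} guessing-game graphs; forcing every edge to be bidirectional changes the constraints of type (4) in Theorem~\ref{Thm:ShannonEntropy} for both linear programs, and there is no guarantee the Shannon/non-Shannon gap survives this symmetrisation. You flag the risk that ``a naive graph may well have $\DFZ(G)=\Sh(G)$,'' but the paper's data shows how severe that risk is: among the roughly $12$ million undirected graphs on at most $10$ vertices, only $R$ and $R^-$ exhibited any gap at all between the fractional clique cover bound and $\Sh$, and the paper found them not by a matroid encoding but by exhaustive search (with the gap then certified in exact arithmetic). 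So what you have is a correct reduction plus a research programme; the existence statement carrying all the mathematical content --- an explicit undirected $G$ with $\DFZ(G)<\Sh(G)$ --- is exactly the piece that is missing.
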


\section{Main results}\label{Sec:Main}

In this section we present our new results, most notably that both
Conjectures \ref{Conj:LowerBoundSharp} and
\ref{Conj:UpperBoundSharp} are false. Counterexamples were found by
searching through all undirected graphs on $10$ vertices or less.
For speed purposes, the search was done using floating point
arithmetic and as such there may be counterexamples that were missed
due to rounding errors. (Although this is highly unlikely, we do not
claim that it is impossible.) Despite this, we feel that it is still remarkable
that of the roughly $12$ million graphs that were checked we only
found $2$ graphs whose lower and upper bounds (given by the
fractional clique cover, and Shannon bound respectively) did not
match: the graph $R$ given in Figure \ref{Fig:RGraph}, and the graph
$R^-$ which is identical to $R$ but with the undirected edge between
vertices $9$ and $10$ removed. 

\begin{figure}[ht]
\begin{center}
\includegraphics[width=8cm]{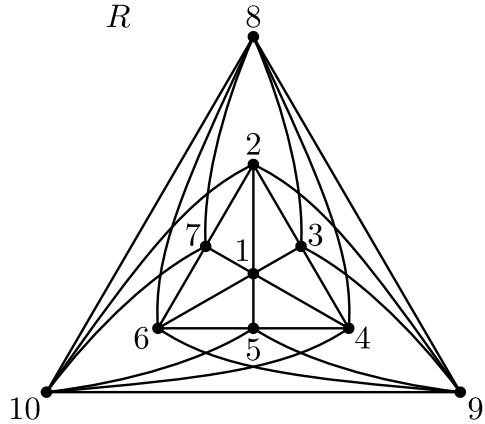}
\caption{The undirected graph $R$.}\label{Fig:RGraph}
\end{center}
\end{figure}

The graph $R$ is particularly
extraordinary as we will see that with a few simple modifications we
can create graphs which answer a few other open problems.

We begin our analysis of $R$ and $R^-$ by determining their
fractional clique cover number.

\begin{lemma}\label{Lem:RR-Lower}
We have $\kappa_f(R)=\kappa_f(R^-) = 10/3$.
\end{lemma}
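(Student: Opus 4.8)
The plan is to pin down $\kappa_f$ by bounding it above and below by $10/3$ separately, using that $\kappa_f(G)$ is precisely the optimal value of the linear program minimising $\sum_{k \in K(G)} w(k)$ over all regular fractional clique covers $w$.

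For the lower bound I would first determine the clique structure of the two graphs. I expect that $R$ and $R^-$ contain triangles but no $K_4$, i.e.\ $\omega(R) = \omega(R^-) = 3$. Granting this, Lemma~\ref{Lem:BoundOnFractionalClique} immediately yields
\[
\kappa_f(R) \geq \frac{|V(R)|}{\omega(R)} = \frac{10}{3},
\]
and the same bound for $R^-$. Establishing $\omega = 3$ amounts to exhibiting a triangle (read off from Figure~\ref{Fig:RGraph}) and checking that no four pairwise-adjacent vertices occur. Since deleting the edge $9$--$10$ cannot create new cliques, it suffices to verify $K_4$-freeness of $R$ itself; then $\omega(R^-) \leq \omega(R) = 3$, while $\omega(R^-) \geq 3$ follows from the presence of any triangle not using the edge $9$--$10$.

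For the upper bound I would exhibit explicit fractional clique covers of total weight $10/3$. Here the lower bound argument is a useful guide: equality $\kappa_f = |V(G)|/\omega$ forces an optimal regular cover to place positive weight only on maximum cliques, since in the proof of Lemma~\ref{Lem:BoundOnFractionalClique} we have $\sum_k w(k)|V(k)| = |V(G)|$ and $\sum_k w(k)|V(k)| \leq \omega \sum_k w(k)$, with equality exactly when every clique of positive weight is of size $\omega$. I would therefore look for a weighting supported on triangles in which every vertex receives total weight exactly $1$; then $\sum_k w(k)\cdot 3 = \sum_k w(k)|V(k)| = 10$ gives $\sum_k w(k) = 10/3$ automatically. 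For $R$ I expect a symmetric family of triangles (mirroring the symmetry of the graph) to furnish such a cover; for $R^-$ the cover must avoid any triangle through the edge $9$--$10$, so a separate triangle family has to be produced and checked.

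The main obstacle will be the constructive upper bound for $R^-$: with one edge deleted, the available triangles form a strict subset of those in $R$, and one must confirm that there is still enough flexibility to cover every vertex to total weight exactly $1$ using triangles alone. Verifying $K_4$-freeness of $R$ is routine but must be carried out carefully over the explicit edge set. Once a valid triangle weighting of total weight $10/3$ is displayed for each graph, the matching lower bound from Lemma~\ref{Lem:BoundOnFractionalClique} finishes the proof. As a fallback, the lower bound could instead be certified by LP duality, exhibiting vertex weights $y(v) \geq 0$ with $\sum_{v \in k} y(v) \leq 1$ for every clique $k$ and $\sum_v y(v) = 10/3$, but the clique-number bound is cleaner and is exactly what Lemma~\ref{Lem:BoundOnFractionalClique} was set up to provide.
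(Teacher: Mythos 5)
Your proposal matches the paper's proof essentially step for step: the lower bound comes from Lemma~\ref{Lem:BoundOnFractionalClique} (using $\omega(R)=\omega(R^-)=3$), and the upper bound from an explicit regular fractional clique cover consisting of ten triangles of weight $1/3$ each, with every vertex covered exactly once. The only presentational difference is that the paper exhibits a single such cover for $R^-$ (whose triangles avoid the edge $9$--$10$) and observes that it is automatically also an optimal cover for $R$, whereas you frame the two graphs as needing separate covers.
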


\begin{proof}
By Lemma \ref{Lem:BoundOnFractionalClique} we know that
$\kappa_f(R)$ and $\kappa_f(R^-)$ are bounded below by $10/3$. To
show they can actually attain $10/3$ we need to construct explicit regular fractional clique covers whose weights add up to $10/3$.


For $R^-$ we give a weight of $1/3$ to the cliques $\{1,2,3\}$,
$\{1,4,5\}$, $\{1,6,7\}$, $\{2,3,9\}$, $\{2,7,10\}$, $\{3,8,9\}$,
$\{4,5,10\}$, $\{4,8,10\}$, $\{5,6,9\}$, $\{6,7,8\}$, and a weight
of $0$ to all other cliques. Note that this is also an optimal regular fractional clique cover for $R$.
\end{proof}

\begin{theorem}\label{Thm:R-Upper}
We have
\begin{enumerate}
\item $\Sh(R^-) = 114/17 = 6.705\ldots$
\item $\ZY(R^-) = 1212/181 = 6.696\ldots$
\item $\DFZ(R^-) = 59767/8929 = 6.693\ldots$
\item $\Ingl(R^-) = 20/3 = 6.666\ldots$
\end{enumerate}
\end{theorem}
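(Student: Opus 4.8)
The plan is to treat each of the four quantities for what it is: the optimal value of an explicit linear program in the $2^{|V(R^-)|}=1024$ variables $\{H(X) : X\subseteq X_{R^-}\}$. In every case the constraints are the basic bounds $H(X)\ge 0$ and $H(X)\le|X|$ (Properties \ref{Item:VarLowerBound} and \ref{Item:VarUpperBound} of Theorem \ref{Thm:ShannonEntropy}), the graph equalities $H(X,Y)=H(Y)$ coming from the in-neighbourhood structure of $R^-$ (Property \ref{Item:VarGraphConstraint}), and a family of information inequalities that changes from part to part: Shannon's inequality for $\Sh$, the Zhang--Yeung inequality for $\ZY$, the $214$ Dougherty--Freiling--Zeger inequalities for $\DFZ$, and the Ingleton inequality for $\Ingl$; the objective is always $H(X_{R^-})$. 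By linear programming duality, to pin down each optimum exactly it suffices to exhibit a matching pair consisting of a primal feasible point and a dual feasible point whose objective values coincide with the stated rational number.

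For the upper-bound direction (LP value $\le$ the claimed value) I would supply, for each of the four programs, an explicit \emph{dual certificate}: nonnegative multipliers on the inequality constraints together with unrestricted multipliers on the graph equalities, so that the induced combination collapses, after cancellation, to exactly $H(X_{R^-})\le 114/17$ (respectively $1212/181$, $59767/8929$, $20/3$). Verifying such a certificate is a finite, purely mechanical check over $\mathbb{Q}$: one confirms that the coefficient of every $H(X)$ in the combination matches the coefficient on the target inequality and that the constant agrees.

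For the lower-bound direction (LP value $\ge$ the claimed value) I would exhibit a feasible entropy-value vector achieving the objective. The Ingleton case is the cleanest: the fractional clique cover strategy is linear, so its entropy vector satisfies the Ingleton inequality and is feasible for that program, and by Lemma \ref{Lem:RR-Lower} its objective is $|V(R^-)|-\kappa_f(R^-)=10-10/3=20/3$; hence $\Ingl(R^-)\ge 20/3$. For the Shannon, Zhang--Yeung and Dougherty--Freiling--Zeger programs I would instead list the optimal primal vectors produced by the solver (polymatroids respecting the graph constraints, not necessarily entropic), again rationalised and verified to satisfy every constraint exactly. The nesting $114/17>1212/181>59767/8929>20/3$ is precisely the monotonicity one expects as the constraint families tighten.

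The main obstacle is computational rather than conceptual. The underlying programs are very large---$1024$ entropy variables, and in the $\DFZ$ case $214$ inequalities instantiated over all ordered quadruples of subsets, producing on the order of millions of constraints---so the certificates can only be found by machine. Since the search described earlier used floating-point arithmetic, the real work in making the argument rigorous is to rationalise the computed primal and dual solutions and re-verify feasibility together with the duality match in exact arithmetic; exploiting the automorphisms of $R^-$ to symmetrise the solutions is what keeps the certificates small enough to record and check.
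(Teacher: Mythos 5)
Your proposal is correct and takes essentially the same route as the paper: the paper's own proof of Theorem \ref{Thm:R-Upper} consists precisely of solving these four linear programs by computer and verifying the results in exact (non-floating-point) arithmetic, with the certificates relegated to data files available on request. Your account of matching primal/dual rational certificates, exploiting the graph's automorphisms, and re-verifying feasibility over $\mathbb{Q}$ is simply a more explicit description of what that machine-checked verification amounts to, so there is nothing to correct.
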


From  Lemma \ref{Lem:RR-Lower} and Theorem \ref{Thm:R-Upper} we know
that
\[
20/3\leq \gn(R^-)\leq 59767/8929,
\]
and although we could not determine the asymptotic guessing number
exactly it does show that it does not equal the Shannon bound,
disproving Conjecture \ref{Conj:UpperBoundSharp}. Given that the
Shannon bound is not sharp we might be tempted to conjecture that
the asymptotic guessing number is the same as the Zhang-Yeung bound,
but Theorem \ref{Thm:R-Upper} also shows this to be false.
Interestingly the Ingleton bound does match the lower bound, showing
that if we restrict ourselves to only considering linear strategies
on blowups we can do no better than the fractional clique cover
strategy.

It remains an open question as to whether a non-linear
strategy on $R^-$ can do better than $20/3$ or whether by
considering the right set of entropy inequalities we can push the
upper bound down to $20/3$.

\begin{proof}[Proof of Theorem \ref{Thm:R-Upper}.]
Calculating the upper bounds involves solving rather large linear
programs. Hence the proofs are too long to reproduce here and it is
unfeasible for them to be checked by humans. Data files verifying
our claims can be provided upon request. We stress that although the
results were verified using a computer that no floating point data
types were used during the verification. Consequently no rounding
errors could occur in the calculations making the results completely
rigorous.
\end{proof}

Although $R$ is a counterexample to Conjecture
\ref{Conj:LowerBoundSharp} its optimal strategy is somewhat
complicated. So instead we will disprove the conjecture by showing a
related graph which we will call $R_{c}$ is a counterexample. The
undirected graph $R_c$ is constructed from $R$ by \emph{cloning} $3$
of its vertices. (Cloning $3$ vertices is equivalent to creating a
blowup of $R$ with $2$ vertices in $3$ of the vertex classes and
just $1$ vertex in the other classes.) The vertices we clone are $8,
9, 10$, and we label the resulting new vertices $8', 9'$, and $10'$
respectively.

\begin{theorem} \label{Thm:Rc}
We have $\gn(R_c) = 9$ while the fractional clique cover bound of $R_c$ is
$26/3 < 9$. In particular, $R_c$ provides a counterexample to Conjecture \ref{Conj:LowerBoundSharp}.
\end{theorem}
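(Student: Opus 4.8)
The plan is to prove the theorem in three pieces: the value of the fractional clique cover bound, the upper bound $\gn(R_c)\le 9$, and the matching lower bound $\gn(R_c)\ge 9$. The conjecture is then disproved because $26/3<9$. The cloning of $8,9,10$ gives $|V(R_c)|=13$, and since cloning inserts only \emph{non-adjacent} twins, the maximum cliques of $R_c$ are still triangles, so $\omega(R_c)=3$. Lemma~\ref{Lem:BoundOnFractionalClique} then yields $\kappa_f(R_c)\ge 13/3$. For the reverse inequality I would exhibit an explicit regular fractional clique cover of $R_c$ supported entirely on triangles, obtained by reweighting the ten triangles of the cover of $R$ from Lemma~\ref{Lem:RR-Lower} so that each cloned class $\{8,8'\},\{9,9'\},\{10,10'\}$ receives total weight $1$ while the weight on every uncloned vertex remains $1$ (a naive split of the triangles through $8,9,10$ destroys regularity at their neighbours, so a more careful assignment is needed). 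Because every clique in the support is a triangle, the identity $\sum_k w(k)|V(k)|=|V(R_c)|$ from the proof of Lemma~\ref{Lem:BoundOnFractionalClique} forces $\sum_k w(k)=13/3$ automatically, so $\kappa_f(R_c)=13/3$ and the fractional clique cover bound is $13-13/3=26/3<9$.

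For the upper bound I would use the elementary fact that $\gn(G)\le |V(G)|-\alpha(G)$ for every undirected graph $G$, where $\alpha(G)$ is the independence number. This follows directly from Theorem~\ref{Thm:ShannonEntropy}: if $I$ is an independent set then $\Gamma^-(v)\subseteq V(G)\setminus I$ for every $v\in I$, so by part~\ref{Item:VarGraphConstraint} the variables $X_I$ are determined by $X_{V(G)\setminus I}$, whence $H(X_G)=H(X_{V(G)\setminus I})\le |V(G)|-|I|$ by part~\ref{Item:VarUpperBound}; maximising over strategies and over $I$ gives the claim. It therefore suffices to exhibit an independent set of size $4$ in $R_c$, for which $\gn(R_c)\le 13-4=9$. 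The cloning of $9$ supplies a non-adjacent twin pair $\{9,9'\}$, and one checks from Figure~\ref{Fig:RGraph} that two suitable further vertices (say $4$ and $7$) are non-adjacent to each other and to $9$, giving the required set. Note that the lower bound below will then force $\alpha(R_c)=4$ exactly, so the $\alpha$-bound is tight here even though it is loose for $R^-$.

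The substance of the theorem is the lower bound $\gn(R_c)\ge 9$, which is where the clique cover value $26/3$ must be beaten. Since any strategy on any fixed alphabet lower-bounds the asymptotic guessing number (as recorded after the proof of the existence of $\gn(G)$, via Lemma~\ref{Lem:Monotonic}), it is enough to produce, for one alphabet size $s$, a strategy on $(R_c,s)$ whose winning set $\mc{S}(\mc{F})$ satisfies $|\mc{S}(\mc{F})|\ge s^9$, i.e.\ which wins with probability at least $s^{-4}$. I would seek such a strategy either by lifting the (complicated, non-linear) optimal strategy on $R$ through the clone operation, or by a direct computer-assisted search over a suitable alphabet: the two members of each cloned class see the same neighbourhood information, and the extra degrees of freedom created by the twins are exactly what should let the winning set grow from the size $s^{26/3}$ available to clique cover strategies up to $s^9$.

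The main obstacle is precisely this last step: producing and verifying a strategy attaining $9$ rather than $26/3$. One expects the Ingleton (linear) bound of $R_c$ to equal its clique cover value, as happens for $R^-$; if so, no linear strategy can reach $9$ and the optimal strategy must be genuinely non-linear, with its correctness checked configuration-by-configuration over the chosen alphabet. Combining the three pieces we obtain $\gn(R_c)=9>26/3=|V(R_c)|-\kappa_f(R_c)$, so $R_c$ violates the equality $\gn(G)=|V(G)|-\kappa_f(G)$ and disproves Conjecture~\ref{Conj:LowerBoundSharp}.
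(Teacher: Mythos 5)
Your three-part skeleton coincides with the paper's, and your upper bound is in fact a genuine simplification: the neighbourhood of vertex $9$ in $R$ is $\{2,3,5,6,8,10\}$ and vertices $4$ and $7$ are non-adjacent, so $\{4,7,9,9'\}$ is indeed independent in $R_c$ and $\gn(R_c)\leq 13-4=9$ follows from Lemma~\ref{Lem:RemoveVertices} (equivalently from your entropy argument). The paper instead proves $\gn(R_c)\leq 9$ by a computer-certified calculation that $\Sh(R_c)=9$, so your route replaces a machine proof by a one-line check of Figure~\ref{Fig:RGraph}.

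The fatal problem is the lower bound $\gn(R_c)\geq 9$, which is the entire substance of the theorem and which you do not prove: you only outline a search, and the heuristic steering that search is backwards. You predict that $\Ingl(R_c)$ should equal $26/3$, so that any witness must be non-linear and verifiable only configuration-by-configuration. In fact the paper exhibits a \emph{linear} strategy over $A_3$: all players play assuming the four linearly independent conditions \eqref{Eq:Rc1}--\eqref{Eq:Rc4} hold modulo $3$; player $1$ and the six clone-class players each solve one condition for their own value, and each of players $2,\dots,7$ solves a suitable linear combination of the conditions (for instance, summing \eqref{Eq:Rc3}, \eqref{Eq:Rc4}, twice \eqref{Eq:Rc1} and twice \eqref{Eq:Rc2} isolates $a_2$ against values that player $2$ sees). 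The four conditions hold simultaneously with probability $3^{-4}$, giving $\gn(R_c)\geq 13-4=9$, and the whole verification is a hand computation over $\mathbb{F}_3$. In particular linear strategies do attain $9$, so $\Ingl(R_c)=9\neq 26/3$: cloning is exactly what destroys the Ingleton obstruction present for $R^-$, and a search restricted to non-linear strategies would be looking in the wrong place. Without an explicit verified strategy of this kind your proposal does not establish the theorem.

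There is also a gap in your evaluation of $\kappa_f(R_c)$. You propose to support a regular cover on clone-copies of the ten cover triangles of Lemma~\ref{Lem:RR-Lower}, but no such cover exists (note also that regularity requires each clone \emph{vertex}, not each clone class, to receive weight $1$). Writing $w_F$ for the total weight on the copies of a cover triangle $F$, summing the regularity constraints over vertices $2,\dots,7$ and using regularity at vertex $1$ gives $2S+T=4$, while summing them over the six clone vertices gives $S+2T=6$, where $S$ is the total weight on the five families meeting one cloned class and $T$ that on the two families $\{3,8,9\}$ and $\{4,8,10\}$ meeting two; hence $T=8/3$, contradicting the bounds $w_{\{3,8,9\}}\leq 1$ and $w_{\{4,8,10\}}\leq 1$ forced by regularity at vertices $3$ and $4$. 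The idea you are missing is that $\{8,9,10\}$ is a triangle of $R$, so $\{8',9',10'\}$ is a clique of $R_c$: the paper keeps the ten triangles at weight $1/3$ and simply adds $\{8',9',10'\}$ with weight $1$, a regular cover of total weight $13/3$, which together with your Lemma~\ref{Lem:BoundOnFractionalClique} lower bound gives $\kappa_f(R_c)=13/3$ exactly and hence the claimed bound $13-13/3=26/3$.
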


\begin{proof}
To prove that the fractional clique cover bound is $26/3$ it is enough to show that
$\kappa_f(R_c) = 13 - 26/3 = 13/3$. Lemma \ref{Lem:BoundOnFractionalClique}
tells us $\kappa_f(R_c)\geq 13/3$. It is also easy to show
$\kappa_f(R_c)\leq 13/3$ as it trivially follows from extending the regular fractional clique cover given in the proof of Lemma
\ref{Lem:RR-Lower} by giving a weight of $1$ to the clique
$\{8',9',10'\}$.

The Shannon bound of $R_c$ is $9$ proving $\gn(R_c)\leq 9$. We do
not provide the details of the Shannon bound proof as it is too long
to present here, however data files containing the proof are
available upon request.

All that remains is to prove $\gn(R_c)\geq 9$. Even though this proof was discovered partly using a computer it can be easily verified by humans. In particular, the main conclusion of this theorem, that $R_c$ is a counterexample to Conjecture \ref{Conj:LowerBoundSharp}, can be verified without the need of any computing power. 

Recall that in Section \ref{Sec:Asymptotic} we showed that the asymptotic guessing
number can be lower bounded by considering any strategy on any
alphabet size. We will take our alphabet size $s$ to be $3$. Our
strategy involves all players agreeing to play assuming the
following four conditions hold on the assigned values
\begin{align}
a_1+a_2+2a_3+a_4+2a_5+a_6+2a_7 &\equiv 0 \bmod 3,\label{Eq:Rc1}\\
a_2+a_5+a_8+a_{8'}+a_9+a_{10'} &\equiv 0 \bmod 3,\label{Eq:Rc2}\\
a_3+a_6+a_8+a_{9'}+a_{10}+a_{10'} &\equiv 0 \bmod 3,\label{Eq:Rc3}\\
a_4+a_7+a_{8'}+a_9+a_{9'}+a_{10} &\equiv 0 \bmod 3.\label{Eq:Rc4}
\end{align}
Note that the terms in (\ref{Eq:Rc1}) consist of $a_1$, and values
which player $1$ can see. Hence (\ref{Eq:Rc1}) naturally gives us a
strategy for player $1$, i.e.\ that player $1$ should guess
$-a_2-2a_3-a_4-2a_5-a_6-2a_7\mod 3$. Similarly strategies for
players $8,8',9,9',10$, and $10'$ can be achieved by rearranging
conditions (\ref{Eq:Rc3}), (\ref{Eq:Rc4}), (\ref{Eq:Rc2}),
(\ref{Eq:Rc3}), (\ref{Eq:Rc4}) and (\ref{Eq:Rc2}) respectively. A
strategy for player $2$ can be obtained by taking a linear
combination of the conditions. In particular if we sum
(\ref{Eq:Rc3}), (\ref{Eq:Rc4}), twice (\ref{Eq:Rc1}), and twice
(\ref{Eq:Rc2}) we get
\[
2a_1+a_2+2a_3+2a_7+2a_{9'}+2a_{10} \equiv 0 \bmod 3,
\]
which consists of $a_2$ and values which player $2$ can see,
allowing us to construct a strategy for player $2$. We leave it to the reader to verify that by taking the following linear combinations we obtain strategies for players $3,4,5,6$, and $7$:
\begin{itemize}
\item For player 3, we sum (\ref{Eq:Rc1}),(\ref{Eq:Rc2}),(\ref{Eq:Rc4}) and twice (\ref{Eq:Rc3}).
\item For player 4, we sum (\ref{Eq:Rc2}),(\ref{Eq:Rc3}), twice (\ref{Eq:Rc1}) and twice (\ref{Eq:Rc4}).
\item For player 5, we sum (\ref{Eq:Rc1}),(\ref{Eq:Rc3}),(\ref{Eq:Rc4}) and twice (\ref{Eq:Rc2}).
\item For player 6, we sum (\ref{Eq:Rc2}),(\ref{Eq:Rc4}), twice (\ref{Eq:Rc1}) and twice (\ref{Eq:Rc3}).
\item For player 7, we sum (\ref{Eq:Rc1}),(\ref{Eq:Rc2}),(\ref{Eq:Rc3}) and twice (\ref{Eq:Rc4}).
\end{itemize}

The probability that all players guess correctly under this strategy
is $3^{-4}$, i.e.\ the probability that (\ref{Eq:Rc1}),
(\ref{Eq:Rc2}), (\ref{Eq:Rc3}), (\ref{Eq:Rc4}) all hold. (It is not difficult to check that the conditions are linearly independent.)
Consequently
\[\gn(R_c) \geq |V(R_c)|+\log_3 \mb{P}[\Win(R_c,3,\mc{F})] = 9\]
as desired.
\end{proof}

For completeness we give the asymptotic guessing number of $R$ and
note that it does not match the fractional clique cover bound of
$20/3$ as claimed.

\begin{theorem}
We have $\gn(R) = 27/4$.
\end{theorem}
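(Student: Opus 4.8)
The plan is to establish the two matching bounds $\gn(R)\geq 27/4$ and $\gn(R)\leq 27/4$ separately. Since $27/4 = 6.75 > 20/3$, proving equality simultaneously confirms that $R$ strictly beats its fractional clique cover bound from Lemma \ref{Lem:RR-Lower}, exactly as a counterexample to Conjecture \ref{Conj:LowerBoundSharp} should behave. The denominator $4$ is the signal for how to obtain the lower bound: a linear strategy on a graph always yields an integer guessing number, so no linear strategy on $R$ itself (nor on $R(2)$, whose guessing number would have to be $27/2$) can produce $27/4$. I would therefore pass to the blowup $R(4)$, using Lemma \ref{Lem:BlowupEqual} in the form $\gn(R,q^4) = \gn(R(4),q)/4$, and aim to show $\gn(R(4),q)\geq 27$ for a suitable alphabet; combined with the remark at the end of Section \ref{Sec:Asymptotic} that any alphabet lower-bounds the asymptotic guessing number, this gives $\gn(R)\geq \gn(R,q^4)\geq 27/4$.

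For the lower bound I would exhibit an explicit linear strategy on $R(4)$ over a suitable finite field $\mathbb{F}_q$. Label the $40$ vertices of $R(4)$ by $v_i$ with $v\in V(R)$ and $i\in\{1,2,3,4\}$, and recall that in $R(4)$ the player $v_i$ sees every copy $u_j$ of every in-neighbour $u$ of $v$ in $R$. The goal is to write down $13$ linearly independent linear equations in the $40$ values $a_{v_i}$, playing the role of conditions (\ref{Eq:Rc1})--(\ref{Eq:Rc4}) in the proof of Theorem \ref{Thm:Rc}, such that for each of the $40$ players some linear combination of the $13$ equations isolates that player's own value in terms of values it can see. If all players play under these conditions, then $\mc{S}(\mc{F})$ is the solution set of the $13$ independent constraints, so $|\mc{S}(\mc{F})| = q^{40-13}$ and the winning probability is $q^{-13}$, giving $\gn(R(4),q)\geq 40-13 = 27$. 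Note that $\kappa_f(R(4)) = 4\kappa_f(R) = 40/3 \approx 13.33$ plays the role of the number of ``fractional constraints'' of the clique cover strategy, so the gain comes precisely from beating this with only $13$ genuine linear constraints; this is why the strategy cannot be a fractional clique cover strategy, and why its existence is consistent with $\Ingl(R) = 27/4$ rather than the value $20/3$ seen for $R^-$.

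For the upper bound I would show $\gn(R)\leq \Sh(R) = 27/4$. As in the proof of Theorem \ref{Thm:R-Upper}, the Shannon bound is the optimum of the linear program built from the inequalities of Theorem \ref{Thm:ShannonEntropy} on the $2^{10}$ joint-entropy variables attached to $R$, and solving this program with exact rational arithmetic yields exactly $27/4$. Since $\Sh(R)$ is an upper bound on $\gn(R)$ computed without reference to $\mc{F}$ or $s$, this closes the gap to $27/4$. In contrast to $R^-$, for $R$ the Shannon bound already coincides with the guessing number, which is why $R$ is a counterexample to Conjecture \ref{Conj:LowerBoundSharp} but not to Conjecture \ref{Conj:UpperBoundSharp}.

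The main obstacle is the lower bound: producing the $13$ explicit equations and checking, for each of the $40$ players, that the appropriate linear combination isolates its variable using only visible copies. Unlike the four hand-verifiable conditions used for $R_c$, this is a sizeable bookkeeping task, and I expect to carry it out (and verify it) by computer, recording the equations and the per-player combinations in an auxiliary data file. The upper bound, though conceptually routine, is likewise infeasible to check by hand because of the size of the Shannon linear program, so I would rely on an exact-arithmetic solver to keep the result rigorous.
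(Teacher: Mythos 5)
Your overall architecture is exactly the paper's: the upper bound is $\Sh(R)=27/4$, computed by solving the Shannon linear program in exact rational arithmetic, and the lower bound comes from Lemma \ref{Lem:BlowupEqual} in the form $\gn(R,s^4)=\gn(R(4),s)/4$, realised by a linear strategy on the $40$ vertices of $R(4)$ consisting of $13$ independent linear constraints, so that the winning probability is $s^{-13}$ and $\gn(R(4),s)\geq 40-13=27$. The upper-bound half of your write-up is fine and is handled by the paper in the same computer-assisted way.

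The gap is in the lower bound: you promise the $13$ equations but never produce them, deferring the entire construction to an unspecified computer search over $40$ variables. That construction is the real content of the theorem, and the paper obtains it with no search at all, via a structural observation your proposal misses: $R(4)$ can be partitioned into nine vertex-disjoint triangles, namely $\{1_a,2_a,3_a\}$, $\{1_b,4_a,5_a\}$, $\{1_c,6_a,7_a\}$, $\{2_b,3_b,9_a\}$, $\{2_c,3_c,9_b\}$, $\{4_b,5_b,10_a\}$, $\{4_c,5_c,10_b\}$, $\{6_b,7_b,8_a\}$, $\{6_c,7_c,8_b\}$, together with the remaining $13$ vertices (one leftover copy of each of $1,\ldots,7$ and two leftover copies of each of $8,9,10$), which induce precisely a copy of $R_c$ --- recall $R_c$ is by definition the blowup of $R$ doubling vertices $8,9,10$. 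Playing the complete graph strategy on each triangle gives $9$ linear conditions over $\mathbb{F}_3$, and the four conditions (\ref{Eq:Rc1})--(\ref{Eq:Rc4}) of Theorem \ref{Thm:Rc} give the remaining $4$, for a total of $13$ and a winning probability of $3^{-13}$ on $(R(4),3)$. In other words, the theorem reduces to the already-proved Theorem \ref{Thm:Rc} plus an easy clique decomposition, and is checkable by hand; if you carried out your from-scratch search you would essentially be rediscovering this decomposition, but as written your lower bound is a plan rather than a proof. (A minor side issue: you assert $\Ingl(R)=27/4$, which the paper never states --- it only computes $\Ingl(R^-)$ and $\Ingl(R^L)$ --- though it does follow after the fact from $\Ingl(R)\leq\Sh(R)=27/4$ and the existence of the linear strategy above.)
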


\begin{proof}
The Shannon bound of $R$ is $27/4$ (data files can be provided upon
request).

To show $\gn(R)\geq 27/4$ we will show $\gn(R,81)\geq 27/4$. By
Lemma \ref{Lem:BlowupEqual} this can be achieved if we can construct
a strategy on the guessing game $(R(4),3)$ which has a probability
of winning $3^{-13}$ (which implies $\gn(R(4),3)\geq 27$). To describe
such a strategy let us label the vertices of $R(4)$ such that the
four vertices that are constructed from blowing up $v\in V(R)$ are
labelled $v_a$, $v_b$, $v_c$, and $v_d$. Under this labelling our
strategy for $R(4)$ is to have the cliques $\{1_a,2_a,3_a\}$,
$\{1_b,4_a,5_a\}$, $\{1_c,6_a,7_a\}$, $\{2_b,3_b,9_a\}$,
$\{2_c,3_c,9_b\}$, $\{4_b,5_b,10_a\}$, $\{4_c,5_c,10_b\}$,
$\{6_b,7_b,8_a\}$ and $\{6_c,7_c,8_b\}$ play the complete graph
strategy, and the remaining $13$ vertices, which form a copy of $R_c$,
to play the strategy for $R_c$ as described in the proof of Theorem.
\ref{Thm:Rc}.
\end{proof}

Now that we have shown that Conjectures \ref{Conj:LowerBoundSharp} and
\ref{Conj:UpperBoundSharp} are not true, we turn our attention to
other open questions. Due to the limited tools and methods currently
available, there are many seemingly trivial problems on guessing
games which still remain unsolved. One such problem is the
following.

\begin{problem}\label{Prob:SingleEdge}
Does there exist an undirected graph whose asymptotic guessing
number increases when a single directed edge is added?
\end{problem}

Adding a directed edge gives one of the players more information,
which cannot lower the probability that the players win. However,
surprisingly it seems extremely difficult to make use of the extra
directed edge to increase the asymptotic guessing number. An
exhaustive (but not completely rigorous) search on undirected graphs
with $9$ vertices or less did not yield any examples.

As such, we significantly weaken the requirements in Problem
\ref{Prob:SingleEdge} by introducing the concept of a Superman
vertex. We define a Superman vertex to be one that all other
vertices can see. I.e., given a digraph $G$, we call vertex $u\in V(G)$ a
\emph{Superman vertex} if $uv\in E(G)$ for all $v\in V(G)\setminus \{u\}$. We
can similarly define a Luthor vertex as one which sees all
other vertices. To be precise $u$ is a \emph{Luthor vertex} if $vu\in E(G)$
for all $v\in V(G)\setminus \{u\}$.

\begin{problem}\label{Prob:Superman}
Does there exist an undirected graph whose asymptotic guessing
number increases when directed edges are added to change one of the
vertices into a Superman vertex (or a Luthor vertex)?
\end{problem}

To change one of the vertices into a Superman or Luthor vertex will
often involve adding multiple directed edges, meaning the players
will have a lot more information at their disposal when making their
guesses. We again searched all undirected graphs on $9$ vertices or
less and remarkably still could not find any examples.

With the discovery of the graph $R$ and in particular the graph
$R_c$ we can show the answer is yes to Problem \ref{Prob:SingleEdge}
and consequently Problem \ref{Prob:Superman}. We define the
undirected graph $R_c^-$ to be the same as the graph $R_c$ but with
the undirected edge between vertices $3$ and $8$ removed. We also
define the directed graph $R_c^+$ to be the same as $R_c^-$ but with
the addition of a single directed edge going from vertex $3$ to vertex $8$.

\begin{theorem}
We have $\gn(R_c^-) = 53/6$ and $\gn(R_c^+) = 9$.
\end{theorem}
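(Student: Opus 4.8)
The plan is to prove the two claims $\gn(R_c^-) = 53/6$ and $\gn(R_c^+) = 9$ separately, using the upper/lower bound machinery established in the previous sections together with the explicit strategy for $R_c$ from Theorem \ref{Thm:Rc}.

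For the upper bound $\gn(R_c^-) \leq 53/6$, I would compute the Shannon bound $\Sh(R_c^-)$ by solving the associated linear program, appealing to Theorem \ref{Thm:ShannonEntropy} exactly as in the proofs of Theorems \ref{Thm:R-Upper} and \ref{Thm:Rc}; since $\Sh(G)$ is an upper bound on $\gn(G)$, it would suffice to verify (by computer, with exact rational arithmetic to avoid rounding issues) that this LP optimum equals $53/6$. For the matching lower bound $\gn(R_c^-) \geq 53/6$, I would exhibit an explicit strategy on some blowup or alphabet size, most likely mimicking the linear-combination strategy of Theorem \ref{Thm:Rc} but adjusted to account for the missing undirected edge between vertices $3$ and $8$. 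The removal of that edge means player $8$ can no longer see $a_3$ (and player $3$ can no longer see $a_8$), so the linear conditions \eqref{Eq:Rc1}--\eqref{Eq:Rc4} must be modified so that the condition used to determine each of players $3$ and $8$ no longer relies on the forbidden value. I expect the winning probability to drop from $3^{-4}$ (giving $\gn = 9$ for $R_c$) to something yielding $53/6$, which suggests working on a blowup $R_c^-(t)$ and the alphabet size $s=3$ via Lemma \ref{Lem:BlowupEqual}; since $53/6$ is not of the form $13 - k$ for integer $k$, a single pure strategy on $A_3$ cannot achieve it directly, and I would need a strategy on $R_c^-(t)$ for suitable $t$ (for instance $t=6$) giving winning probability $3^{-(6\cdot 13 - 53)} = 3^{-25}$.

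For $R_c^+$, the directed edge from $3$ to $8$ is added back (now as a directed rather than undirected edge). The upper bound $\gn(R_c^+) \leq 9$ should follow either from the Shannon bound of $R_c^+$ computed via the same LP, or more cheaply by observing that $R_c^+$ is a subgraph of $R_c$ (it has all the undirected edges of $R_c^-$ plus one directed edge, whereas $R_c$ has the full undirected edge between $3$ and $8$); hence any strategy on $R_c^+$ is also valid on $R_c$, giving $\gn(R_c^+) \leq \gn(R_c) = 9$ by Theorem \ref{Thm:Rc}. For the lower bound $\gn(R_c^+) \geq 9$, the key observation is that the directed edge $38$ gives player $8$ access to $a_3$, which is precisely the piece of information needed to run the original $R_c$ strategy. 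Since the conditions \eqref{Eq:Rc1}--\eqref{Eq:Rc4} only require player $8$ to see $a_3$ (and do not require player $3$ to see $a_8$), the strategy from the proof of Theorem \ref{Thm:Rc} transfers verbatim to $R_c^+$, giving winning probability $3^{-4}$ on $A_3$ and hence $\gn(R_c^+) \geq 9$.

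The main obstacle will be the lower bound for $R_c^-$: producing an explicit linear strategy on a blowup whose winning probability exactly matches the value $53/6$. Whereas the $R_c$ and $R_c^+$ strategies follow immediately from the linear-combination construction of Theorem \ref{Thm:Rc}, removing the edge $38$ breaks the symmetry that made four linearly independent conditions each determinable by the relevant players, and one must carefully redesign the family of linear constraints on the blowup so that every player can still deduce their own value from what they see while maximizing the number of independent constraints satisfied. Verifying linear independence of the constraints and confirming that the resulting probability matches the Shannon bound $53/6$ would be the technical heart of the argument; as with the other theorems in this section, the matching upper bound would be certified by an exact-arithmetic LP computation whose data files can be provided upon request.
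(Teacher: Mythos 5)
Your treatment of $R_c^+$ is essentially the paper's own argument and is correct: the lower bound $\gn(R_c^+)\geq 9$ holds because in the $R_c$ strategy of Theorem \ref{Thm:Rc} player $3$'s guess is $-a_1-2a_2-2a_4-2a_{8'}-2a_9 \bmod 3$, which never uses $a_8$, while player $8$'s guess (from \eqref{Eq:Rc3}) does use $a_3$, and the directed edge from $3$ to $8$ still supplies it. Your shortcut for the upper bound, $\gn(R_c^+)\leq\gn(R_c)=9$ because $R_c^+$ is a subgraph of $R_c$, is a valid and cheaper alternative to the paper's LP computation of $\Sh(R_c^+)$. The upper bound $\Sh(R_c^-)=53/6$ is handled exactly as in the paper (exact-arithmetic LP certificate).

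The genuine gap is the lower bound $\gn(R_c^-)\geq 53/6$. You correctly reduce it to exhibiting a strategy on $(R_c^-(6),3)$ that wins with probability $3^{-25}$, but you then defer the construction as ``the technical heart'' of the argument and suggest it requires redesigning the linear constraints \eqref{Eq:Rc1}--\eqref{Eq:Rc4}; that construction is precisely the content of the proof, and the paper obtains it without modifying those constraints at all. The key observation you are missing is that deleting the edge between $3$ and $8$ does not delete the edge between $3$ and the clone $8'$, so inside the blowup one can rebuild intact copies of $R_c$: the $13$ vertices $\{1_d,\ldots,7_d,8'_a,8'_b,9_a,9_b,10_a,10_b\}$ of $R_c^-(6)$ induce a copy of $R_c$, with $8'_a$ and $8'_b$ playing the roles of $8$ and $8'$ (they lie in one vertex class, hence are non-adjacent, just as $8$ and $8'$ are in $R_c$). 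Taking the $d$, $e$, $f$ layers of vertices $1$--$7$ together with the six copies each of $8'$, $9$, $10$ yields three disjoint such copies of $R_c$, and the remaining $39$ vertices (all copies of $8$, $9'$, $10'$ and the $a,b,c$ layers of $1$--$7$) decompose into $13$ triangles on which the complete graph strategy is played. The winning probability is then $3^{-13}\cdot\bigl(3^{-4}\bigr)^3=3^{-25}$, giving $\gn(R_c^-(6),3)\geq 53$ and hence $\gn(R_c^-)\geq 53/6$. Without this (or an equivalent) explicit decomposition, your argument only establishes $\gn(R_c^-)\leq 53/6$, leaving the first half of the statement unproved.
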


\begin{proof}
The Shannon bounds for $R_c^-$ and $R_c^+$ are $53/6$ and $9$
respectively (data files can be provided upon request).

We will prove $\gn(R_c^+)\geq 9$ by observing that the strategy for
$(R_c,3)$ (see the proof of Theorem \ref{Thm:Rc}) is a valid
strategy for $(R_c^+,3)$. With the exception of player $3$ all
players in $(R_c^+,3)$ have access to the same information they did
in $(R_c,3)$. Player $3$ however, now no longer has access to $a_8$. By studying the strategy player $3$ uses in $(R_c,3)$ we will see
that this is of no consequence. Summing conditions (\ref{Eq:Rc1}),
(\ref{Eq:Rc2}), (\ref{Eq:Rc4}), and twice (\ref{Eq:Rc3}), gives
\[
a_1+2a_2+a_3+2a_4+2a_{8'}+2a_{9} \equiv 0 \bmod 3,
\]
hence player $3$ guesses $-a_1-2a_2-2a_4-2a_{8'}-2a_{9} \bmod 3$ in
$(R_c,3)$. Since player $3$ makes no use of $a_8$ this validates our
claims.

We complete our proof by showing $\gn(R_c^-)\geq 53/6$. We know
$\gn(R_c^-)\geq \gn(R_c^-,3^6) = \gn(R_c^-(6),3)/6$ so it is enough
to show $\gn(R_c^-(6),3)\geq 53$. Since $R_c^-(6)$ had $78$ vertices we can do this by finding a strategy on
$(R_c^-(6),3)$ that wins with a probability of $3^{-25}$. To this end, let us
label the vertices of $R_c^-(6)$ such that the six vertices that are
constructed from blowing up $v\in V(R_c^-)$ are labelled $v_a$,
$v_b$, $v_c$, $v_d$, $v_e$, and $v_f$. Under this labelling, our
strategy for $R_c^-(6)$ is to play the complete graph strategy on
the cliques
\begin{align*}
&\{1_a,2_a,3_a\}, &&\{1_b,2_b,7_a\}, &&\{1_c,3_b,4_a\},
&&\{2_c,3_c,9'_a\}, &&\{4_b,5_a,10'_a\},\\
&\{4_c,5_b,10'_b\}, &&\{5_c,6_a,9'_b\}, &&\{6_b,7_b,8_a\},
&&\{6_c,7_c,8_b\}, &&\{8_c,9'_c,10'_c\},\\
&\{8_d,9'_d,10'_d\}, &&\{8_e,9'_e,10'_e\}, &&\{8_f,9'_f,10'_f\},
\end{align*}
and to play the $R_c$ strategy on the vertices
\begin{align*}
&\{1_d,2_d,3_d,4_d,5_d,6_d,7_d,8'_a,8'_b,9_a,9_b,10_a,10_b\},\\
&\{1_e,2_e,3_e,4_e,5_e,6_e,7_e,8'_c,8'_d,9_c,9_d,10_c,10_d\},\\
&\{1_f,2_f,3_f,4_f,5_f,6_f,7_f,8'_e,8'_f,9_e,9_f,10_e,10_f\}.
\end{align*}
The probability of winning in each of these 13 cliques is $3^{-1}$ while the probability of winning in each of the three copies of $R_c$ is $3^{-4}$. So the overall probability of winning is indeed $3^{-25}$, therefore completing the proof.
\end{proof}

We finish this section by considering a problem motivated by the
reversibility of networks in network coding. Given a digraph $G$,
let $\Reverse(G)$ be the digraph formed from $G$ by reversing all
the edges, i.e.\ $uv\in E(G)$ if and only if $vu\in E(\Reverse(G))$.

\begin{problem}\label{Prob:Reverse}
Does there exist a digraph $G$, such that $\gn(G)\neq
\gn(\Reverse(G))$.
\end{problem}

We were not able to solve this problem. We did however find a graph
$R^S$ for which the Shannon bound of $R^S$ and the Shannon bound of
$\Reverse(R^S)$ did not match. $R^S$ is simply the digraph formed by
making vertex $1$ in $R$ a Superman vertex. In other words, we add
three directed edges to $R$: the edge going from $1$ to $8$, from
$1$ to $9$, and from $1$ to $10$. Consequently $\Reverse(R^S)$ is
the graph formed by making vertex $1$ in $R$ a Luthor vertex. As
such, we will refer to it as $R^L$.

\begin{theorem}\label{Thm:RSandRL}
We have $\Sh(R^S) = 27/4 = 6.75$. For $R^L$ we have the following bounds:
\begin{enumerate}
\item $\Sh(R^L) = 34/5 = 6.8$.
\item $\ZY(R^L) = 61/9 = 6.777\ldots$
\item $\DFZ(R^L) = 359/53 = 6.773\ldots$
\item $\Ingl(R^L) = 27/4 = 6.75.$
\end{enumerate}
\end{theorem}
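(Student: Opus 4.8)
The plan is to obtain each of the five quantities as the optimum of a linear program and to certify each optimum by producing matching primal and dual feasible solutions, exactly as was done for Theorem \ref{Thm:R-Upper}. For a fixed digraph $G$ on $10$ vertices, $\Sh(G)$ is the maximum of the objective variable (the one corresponding to $H(X_G)$) over the $2^{10}=1024$ variables $\{H(X) : X \subseteq X_G\}$, subject to the constraints of Theorem \ref{Thm:ShannonEntropy}: nonnegativity, the cardinality bounds $H(X)\le |X|$, all instances of Shannon's inequality (Property \ref{Item:VarShannon}), and the graph-determination equalities $H(X,Y)=H(Y)$ coming from Property \ref{Item:VarGraphConstraint}. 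Writing $B(G)$ for whichever bound is in question, the programs for $\ZY(G)$, $\DFZ(G)$ and $\Ingl(G)$ use the same variable set and the same graph constraints, but replace the Shannon inequalities by the Zhang--Yeung, the $214$ Dougherty--Freiling--Zeger, and the Ingleton inequalities respectively, each ranging over all relevant choices of the subsets $A,B,C,D$.

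First I would record precisely how the graph-determination constraints differ for $R^S$ and $R^L$, since this asymmetry is the source of the different values. In $R^S$ vertex $1$ is a Superman vertex, so $1$ enters the in-neighbourhood of every other vertex, and the determination equalities force each $X_v$ to be a function of a set that now contains $X_1$. In $R^L$ vertex $1$ is instead a Luthor vertex, so it is $\Gamma^-(1)$ that is enlarged to all of $V(G)\setminus\{1\}$, and the new equality instead makes $X_1$ a function of the remaining variables. These are genuinely different linear programs, which is why $\Sh(R^S)=27/4$ while $\Sh(R^L)=34/5$, even though $R^L=\Reverse(R^S)$.

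For each claimed value $c$ the verification splits into two halves. The upper bound $B(G)\le c$ is certified by a \emph{dual} solution: a combination of the defining constraints, with nonnegative multipliers on the inequalities and free-sign multipliers on the determination equalities, whose algebraic sum is a valid inequality $H(X_G)\le c$. The lower bound $B(G)\ge c$ is certified by a \emph{primal} feasible vector, that is, an explicit assignment of values to all $1024$ entropy variables satisfying every constraint and attaining objective value $c$. By linear programming strong duality these two certificates meet at the optimum, so exhibiting both pins down the exact value.

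The main obstacle is making the $\DFZ$ computation rigorous rather than merely floating point. Applying each of the $214$ inequalities to every admissible $4$-tuple of subsets of a $10$-element ground set produces a constraint matrix far too large to inspect by hand, and an approximate solver only suggests the rational optimum $359/53$ (and the analogous values for the other bounds). The hard part is therefore to re-verify the optimum in exact rational arithmetic: one clears denominators in the candidate primal and dual vectors and checks every constraint as an integer inequality, so that no rounding error is possible. This is precisely the guarantee already invoked in the proof of Theorem \ref{Thm:R-Upper}, and the same verification data establishes all five equalities here. The $\Sh$ and $\Ingl$ programs are by comparison small, and their certificates, including the clean value $27/4$ shared by $\Sh(R^S)$ and $\Ingl(R^L)$, can in principle be checked directly.
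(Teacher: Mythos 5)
Your proposal is correct and matches the paper's approach: the paper proves Theorem \ref{Thm:RSandRL} exactly as it proves Theorem \ref{Thm:R-Upper}, namely by solving the relevant linear programs by computer and verifying the optima in exact (non--floating-point) arithmetic, with the certificate data files available on request. Your write-up simply makes explicit what those data files contain — matching primal and dual certificates for each of the five programs, with the graph-determination constraints differing between $R^S$ and $R^L$ — which is a faithful elaboration of the same argument rather than a different route.
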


\begin{proof}
The proofs are given in data files which can be made available upon
request.
\end{proof}

From the strategy on $R$ we know that $\gn(R^S)\geq 27/4$ and
$\gn(R^L)\geq 27/4$. Hence we have $\gn(R^S)=27/4$. We do not
however know the precise value of $\gn(R^L)$ so it is possible that the
asymptotic guessing numbers of $R^S$ and $R^L$ do not match.

\section{Speeding up the computer search}\label{Sec:Speed}

In this section we mention a few of the simple tricks we used in order to
speed up the computer search which allowed us to search through all
the $10$ vertex graphs and find the graph $R$. We hope that this may
be of use to others continuing this research.

The majority of time spent during the searches was spent determining
the Shannon bound by solving a large linear program. By reducing the
number of constraints that we add to the linear program we can speed up
the optimisation. Given a graph on $n$ vertices a naive formation of
the linear program would result in considering all $2^{3n}$ Shannon
inequalities of the form
\[
\mbox{$H(A,C)+H(B,C)-H(A,B,C)-H(C)\geq 0$ for $A,B,C\subset X_G$}.
\]
However most of these do not need to be added to the linear program. In fact
it is sufficient to just include the inequalities given
by the following lemma.

\begin{lemma}\label{Lem:FewerInequalities}
Given a set of discrete random variables $X_G$, the set of Shannon
inequalities
\[
\mbox{$H(A,C)+H(B,C)-H(A,B,C)-H(C)\geq 0$ for $A,B,C\subset X_G$},
\]
is equivalent to the set of inequalities given by
\begin{enumerate}[(i)]
\item\label{Item:IneqPoset} $H(Y)\leq H(X_G)$ for $Y\subset X_G$ with $|Y|=|X_G|-1$.
\item\label{Item:IneqSubmodular} $H(Y)+H(Z)-H(Y\cup Z)-H(Y\cap Z)\geq 0$ for $Y,Z\subset X_G$ with $|Y|=|Z|=|Y\cap Z|+1$.
\end{enumerate}
\end{lemma}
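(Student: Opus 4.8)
The plan is to show that the two inequality sets define the same polyhedral cone of entropy-like vectors, which amounts to showing that every general Shannon inequality $H(A,C)+H(B,C)-H(A,B,C)-H(C)\geq 0$ can be written as a nonnegative combination of the inequalities in \eqref{Item:IneqPoset} and \eqref{Item:IneqSubmodular}, and conversely. One direction is trivial: each inequality in \eqref{Item:IneqPoset} and \eqref{Item:IneqSubmodular} is itself a Shannon inequality of the stated form (for \eqref{Item:IneqPoset} take $A=X_G\setminus Y$, $B=C=Y$; for \eqref{Item:IneqSubmodular} set $C=Y\cap Z$ and let $A,B$ be the singletons $Y\setminus Z$ and $Z\setminus Y$), so the full Shannon cone is contained in the cone cut out by \eqref{Item:IneqPoset}--\eqref{Item:IneqSubmodular}. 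The substance is the reverse inclusion.

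First I would isolate the two distinct roles the Shannon inequalities play. Setting $A=B$ collapses the inequality to monotonicity, $H(A,C)\geq H(C)$, i.e.\ $H(S)\geq H(T)$ whenever $T\subseteq S$; setting $C$ minimal gives submodularity. So the strategy is to derive all monotonicity and all submodularity constraints from the two restricted families, and then recover the general Shannon inequality as a sum of submodularity steps. For monotonicity I would argue that $H(S)\geq H(T)$ for any $T\subseteq S$ follows by telescoping the elementary monotonicity steps $H(Y\cup\{x\})\geq H(Y)$, and that each such elementary step is a consequence of the maximal-set inequalities in \eqref{Item:IneqPoset} together with the submodular steps in \eqref{Item:IneqSubmodular}; concretely, \eqref{Item:IneqSubmodular} applied to $Y'=Y\cup\{x\}$ and a suitable $Z'$ of the same size lets one shift an element in, and \eqref{Item:IneqPoset} anchors the chain at the top. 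For submodularity I would show that the general submodular inequality $H(Y)+H(Z)\geq H(Y\cup Z)+H(Y\cap Z)$ for arbitrary $Y,Z$ reduces, by induction on $|Y\setminus Z|+|Z\setminus Y|$, to the special case $|Y|=|Z|=|Y\cap Z|+1$ that appears in \eqref{Item:IneqSubmodular}: one peels off a single element from $Y\setminus Z$, applies the rank-one submodular step, and invokes the inductive hypothesis on the residual sets.

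The final assembly is to write the general inequality $H(A,C)+H(B,C)-H(A,B,C)-H(C)\geq 0$ as $H(A,C)+H(B,C)\geq H(A,B,C)+H((A\cap B),C)$ followed by a monotonicity correction $H((A\cap B),C)\geq H(C)$; the first piece is an instance of general submodularity with $Y=A\cup C$ and $Z=B\cup C$ (so $Y\cap Z=(A\cap B)\cup C\supseteq C$), and the second is general monotonicity. Both have just been expressed as nonnegative combinations of \eqref{Item:IneqPoset} and \eqref{Item:IneqSubmodular}, completing the reverse inclusion.

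I expect the main obstacle to be the submodular reduction step, namely verifying carefully that peeling a single element preserves the cardinality relation required by \eqref{Item:IneqSubmodular} and that the residual sets still satisfy the inductive hypothesis; the bookkeeping of which elements lie in $Y\setminus Z$, $Z\setminus Y$, and $Y\cap Z$ is where a naive argument can slip. A clean way to handle this is to formulate submodularity in terms of the rank function's diminishing-returns property, $H(S\cup\{x\})-H(S)\leq H(T\cup\{x\})-H(T)$ for $T\subseteq S$ and $x\notin S$, prove that rank-one form is equivalent to \eqref{Item:IneqSubmodular}, and then derive full submodularity from diminishing returns by a double telescoping over the elements of $Y\setminus Z$, thereby avoiding ad hoc case analysis.
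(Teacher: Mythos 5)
Your overall route coincides with the paper's own proof: both directions are organised the same way. You exhibit (i) and (ii) as instances of Shannon's inequality; you telescope the rank-one submodular inequalities (ii) to get general submodularity (your ``double telescoping over the elements of $Y\setminus Z$'' is exactly the paper's grid sum over $X_{i,j}=(Y\cap Z)\cup Y'_i\cup Z'_j$); you extract elementary monotonicity using (i) as the anchor at the top, telescope it, and reassemble the general Shannon inequality as general submodularity with $Y=A\cup C$, $Z=B\cup C$ plus the monotonicity correction $H((A\cap B)\cup C)\geq H(C)$ --- precisely the paper's final step. Your sketch of the elementary monotonicity step (``a suitable $Z'$ of the same size'') is vaguer than the paper's, which applies generalised submodularity to $Z$ and $Y\cup\overline{Z}$ with $\overline{Z}=X_G\setminus Z$ and then invokes (i) on $Y\cup\overline{Z}$; your version can be made rigorous by downward induction on $|Y|$, taking $Z'=Y\cup\{z\}$ for some $z\notin Y\cup\{x\}$ with base case (i) itself, so this is a cosmetic rather than substantive difference.

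There is, however, one step that fails as written: your substitution for deriving (i) from a Shannon inequality. With $A=X_G\setminus Y$ and $B=C=Y$, the inequality $H(A,C)+H(B,C)-H(A,B,C)-H(C)\geq 0$ reads $H(X_G)+H(Y)-H(X_G)-H(Y)\geq 0$, i.e.\ $0\geq 0$: it is vacuous and does not yield $H(Y)\leq H(X_G)$. The correct substitution (the one the paper uses) is $A=B=X_G$ and $C=Y$, which gives $H(X_G)+H(X_G)-H(X_G)-H(Y)\geq 0$. Since this slip occurs in the easy direction and the repair is one line, it does not affect the viability of your argument, but as stated that particular instance proves nothing. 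Your substitution for (ii) (taking $C=Y\cap Z$ and $A,B$ the singletons $Y\setminus Z$, $Z\setminus Y$) is correct.
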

Observe that for a graph on $n$ vertices there are $n$ inequalities of type (\ref{Item:IneqPoset}) and $n(n-1)2^{n-3}$ inequalities of type (\ref{Item:IneqSubmodular}). (Counting the inequalities of type (\ref{Item:IneqSubmodular}) is equivalent to counting the number of squares in the hypercube poset formed from looking at the subsets of $X_G$.)
Overall, this is about the cube root of the initial number of inequalities.

\begin{proof}[Proof of Lemma \ref{Lem:FewerInequalities}]
Setting $A=X_G$, $B=X_G$, and $C=Y$, shows that the Shannon inequalities
imply the set of inequalities described by (\ref{Item:IneqPoset}).
Setting $A=Y$, $B=Z$, and $C=Y\cap Z$, shows that the Shannon
inequalities imply (\ref{Item:IneqSubmodular}). 

To show (\ref{Item:IneqPoset}) and (\ref{Item:IneqSubmodular}) imply
the Shannon inequalities we will first generalise
(\ref{Item:IneqPoset}) and (\ref{Item:IneqSubmodular}).

We will begin by showing that (\ref{Item:IneqSubmodular}) implies
\[H(Y)+H(Z)-H(Y\cup Z)-H(Y\cap Z)\geq 0\]
for any $Y,Z\subset X_G$.
Let $Y\setminus (Y\cap Z) = \{Y_1,Y_2,\ldots,Y_n\}$ and $Z\setminus
(Y\cap Z) = \{Z_1,Z_2,\ldots,Z_m\}$, where $Y_1,\ldots,Y_n$ and
$Z_1,\ldots,Z_m$ are single discrete random variables. Define $Y'_i$
to be $\{Y_1,\ldots,Y_i\}$ for $1\leq i \leq n$ and $Y'_0 =
\emptyset$. We define $Z'_i$ similarly. Finally let $X_{i,j} =
(Y\cap Z)\cup Y'_i\cup Z'_j$, and note that $X_{0,0} = Y\cap Z$,
$X_{n,0} = Y$, $X_{0,m}=Z$, and $X_{n,m} = Y\cup Z$. By
(\ref{Item:IneqSubmodular}) we have
\[
0\leq
\sum_{i=0}^{n-1}\sum_{j=0}^{m-1}[H(X_{i+1,j})+H(X_{i,j+1})-H(X_{i+1,j+1})-H(X_{i,j})].
\]
Here, the right hand side is telescopic and simplifies to the desired expression
\[H(Y)+H(Z)-H(Y\cup Z)-H(Y\cap Z).\]

Next we will generalise (\ref{Item:IneqPoset}) to show that for any $Y\subset Z \subset X_G$ with $|Y| = |Z|-1$ we have $H(Y) \leq H(Z)$. Let us define $\overline{Z}$ to be $X_G\setminus Z$. Then, by the generalised version of (\ref{Item:IneqSubmodular}) we know that
\[
H(Z) + H(Y \cup \overline{Z}) - H(Z \cup (Y\cup\overline{Z})) - H(Z \cap (Y\cup\overline{Z})) \geq 0
\]
which simplifies to
\begin{align}\label{Eqn:GeneralisedI}
H(Z) + H(Y \cup \overline{Z}) - H(X_G) - H(Y) \geq 0.
\end{align}
Observe that $|Y\cup\overline{Z}| = |X_G|-1$, so (\ref{Item:IneqPoset}) tells us that $H(X_G)-H(Y\cup\overline{Z})\geq 0$ which when added to (\ref{Eqn:GeneralisedI}) gives the inequality $H(Z)-H(Y)\geq 0$ as required.

We can now further generalise (\ref{Item:IneqPoset}) to show that for any
$Y\subset Z\subset X_G$ we have that $H(Y)\leq H(Z)$. To do this, let
$Z\setminus Y = \{Z_1,Z_2,\ldots,Z_n\}$, where $Z_1,\ldots,Z_n$ are
single discrete random variables. Then, by repeated applications of our generalisation of (\ref{Item:IneqPoset}) we have
\[
H(Y)\leq H(Y,Z_1)\leq H(Y,Z_1,Z_2)\leq \cdots \leq
H(Y,Z_1,\ldots,Z_n) = H(Z).
\]

It is now a trivial matter to show that (\ref{Item:IneqPoset}) and
(\ref{Item:IneqSubmodular}) imply Shannon's inequality. Simply set
$Y=A\cup C$ and $Z=B\cup C$ in the generalised version of
(\ref{Item:IneqSubmodular}) to get
\[
H(A,C)+H(B,C)-H(A,B,C)-H(A\cap B, C)\geq 0
\]
and since $H(A\cap B, C)\geq H(C)$ by the improved version of
(\ref{Item:IneqPoset}), the result follows.
\end{proof}

It is also worth mentioning that $H(\emptyset)=0$ together with the
Shannon inequalities imply $H(Y,Z)\leq H(Y)+H(Z)$ for disjoint
$Y,Z$. Hence, the constraints $H(X)\leq |X|$ for all $X$ in the
Shannon bound linear program are not all necessary and can be
reduced to $H(X)\leq |X|$ for $|X|=0$, or $1$.

When determining each graph's asymptotic guessing number, the natural
approach is to calculate the lower bound using the fractional clique
cover number, then calculate the Shannon bound and check if they
match. However the linear program that gives us the fractional
clique cover number also gives us a regular fractional clique cover
from which an explicit strategy can be constructed. It is easy to
convert this strategy into a feasible point of the Shannon bound
linear program. Hence we can save a significant amount of time by
simply checking if this feasible point is optimal, rather than by
calculating the Shannon bound from scratch. Note that we check for
optimality by solving the same Shannon bound linear program with the
modification that we remove those constraints for which equality
is not achieved by the feasible point.

The modified Shannon bound linear program is still the most time
consuming process in the search, so ideally we would like to avoid
it when possible. Christofides and Markstr\"om \cite{Christofides&Markstrom11} show that for an
undirected graph $G$
\begin{align*}
\gn(G) \leq |V(G)|-\alpha(G),
\end{align*}
where $\alpha(G)$ is the number of vertices in the maximum
independent set. This can be interpreted as a simple consequence of
the fact that removing players increases the probability the
remaining players will win. (If the probability of winning
decreased, the players could just create fictitious replacement
players before the game started.) As such we present a simple
generalization of this result.

\begin{lemma}\label{Lem:RemoveVertices}
Given a digraph $G$ and an induced subgraph $G'$,
\[
\max_{\mc{F}}\mb{P}[\Win(G,s,\mc{F})]\leq
\max_{\mc{F}}\mb{P}[\Win(G',s,\mc{F})]
\]
or equivalently $\gn(G,s)\leq |V(G)|-|V(G')|+\gn(G',s)$. Hence
\[
\gn(G) \leq |V(G)|-|V(G')|+\gn(G').
\]
\end{lemma}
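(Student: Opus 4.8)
The plan is to prove the statement about probabilities directly, and then deduce the two guessing-number inequalities as immediate consequences via the definition of $\gn(G,s)$ and the asymptotic limit. The intuition, as the preceding paragraph suggests, is that the players on $G'$ can only do better than the players on $G$, because the extra players in $V(G)\setminus V(G')$ can be simulated for free. More precisely, I would argue that any optimal strategy $\mc{F}$ on $(G,s)$ induces a strategy on $(G',s)$ whose winning probability is at least as large. The simulation is the other direction from what one might first expect: we start from a strategy on the larger graph $G$ and want to manufacture a strategy on the smaller graph $G'$.

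The cleanest route is to fix an optimal strategy $\mc{F} = (f_v)_{v\in V(G)}$ for $(G,s)$ and consider running the game on $(G',s)$ where the players of $G'$ additionally \emph{pretend} to also control the missing players in $V(G)\setminus V(G')$. Concretely, the players of $G'$ can simulate fixing the values of the missing vertices to some constant, say all-zero, assignment; since $G'$ is an \emph{induced} subgraph, for every $v\in V(G')$ the in-neighbourhood $\Gamma^-_{G'}(v)$ is exactly $\Gamma^-_G(v)\cap V(G')$, so each player $v\in V(G')$ has access to precisely the $G$-values of its $G'$-in-neighbours, and can fill in the (constant) values of any missing in-neighbours itself. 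Thus player $v$ can still evaluate $f_v$ and use it as its guess on $(G',s)$. Under this induced strategy, the event that all $G'$-players guess correctly contains the event that all $G$-players would have guessed correctly \emph{conditioned on the missing vertices taking the chosen constant values}. Summing over the $s^{|V(G)|-|V(G')|}$ possible constant assignments for the missing vertices and averaging, at least one choice achieves a conditional success probability for the $G'$-players that is at least the overall success probability on $G$. This gives
\[
\max_{\mc{F}}\mb{P}[\Win(G,s,\mc{F})] \leq \max_{\mc{F}}\mb{P}[\Win(G',s,\mc{F})],
\]
which is the first displayed inequality.

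For the reformulation in terms of guessing numbers, I would simply substitute the definition $\gn(G,s) = |V(G)| + \log_s\left(\max_{\mc{F}}\mb{P}[\Win(G,s,\mc{F})]\right)$ into the probability inequality. Taking $\log_s$ of both sides (which preserves the inequality since $\log_s$ is increasing) and rearranging yields $\gn(G,s) - |V(G)| \leq \gn(G',s) - |V(G')|$, i.e.\ $\gn(G,s)\leq |V(G)|-|V(G')|+\gn(G',s)$. The asymptotic statement then follows by letting $s\to\infty$ and invoking the existence of the limit established earlier in the excerpt.

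The main obstacle to be careful about is the averaging step: one must justify that plugging in a \emph{fixed constant} for the missing vertices is legitimate and that some fixed choice does at least as well as the average. The point is that in the $(G,s)$ game the missing vertices are assigned values uniformly at random, so $\mb{P}[\Win(G,s,\mc{F})]$ is exactly the average over all assignments to $V(G)\setminus V(G')$ of the conditional probability that the $G'$-players guess correctly; a standard averaging argument then guarantees a constant assignment attaining at least this average. A secondary technical point worth checking is that the induced-subgraph hypothesis is genuinely used, ensuring that no $G'$-player's guessing function $f_v$ depends on a value it cannot actually see in $G'$ beyond those missing in-neighbours whose values we have fixed by hand.
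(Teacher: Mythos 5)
Your proof is correct, and it is essentially the argument the paper has in mind: the paper omits the proof as ``trivial,'' offering only the one-line intuition that removed players can be replaced by fictitious ones, which is exactly what you formalize by hardwiring constant values for the missing vertices (using induced-ness so each remaining player can still evaluate its original guessing function) and then averaging over the $s^{|V(G)|-|V(G')|}$ constant assignments to pick a good one. Your handling of the two reformulations --- substituting the definition of $\gn(G,s)$ and passing to the limit $s\to\infty$ --- is also the intended routine step, so nothing is missing.
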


We do not provide a proof as it is trivial. Note that the result
$\gn(G)\leq |V(G)|-\alpha(G)$ is a simple corollary of this result
as an independent set has a guessing number of $0$.

Given a graph $G$, if we can find a subgraph such that the upper
bound given in Lemma \ref{Lem:RemoveVertices} matches the fractional
clique cover bound, then we have determined the asymptotic guessing
number, and can avoid an expensive Shannon bound calculation. This
approach is particularly fast when doing an exhaustive search as all
the smaller graphs will have had their asymptotic guessing numbers
already determined.

One issue with this method is that if we are looking for a
counterexample to the Shannon bound being sharp, there is a
possibility that we may miss them because we avoided calculating the
Shannon bound for every graph. Consequently to alleviate our fear we
need the following result.

\begin{lemma}\label{Lem:SubgraphShannonBound}
Given a digraph $G$ and an induced subgraph $G'$, we have
\[ \Sh(G) \leq |V(G)|-|V(G')| + \Sh(G'). \]
\end{lemma}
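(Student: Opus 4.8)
The plan is to take an optimal feasible point of the Shannon bound linear program for $G$ and produce from it, by ``conditioning on the deleted vertices'', a feasible point of the linear program for $G'$ whose objective value is at least $\Sh(G) - (|V(G)| - |V(G')|)$. Write $W = V(G) \setminus V(G')$ and $X_W = \{X_v : v \in W\}$, so that $|X_W| = |V(G)| - |V(G')|$. Let $h$ denote an optimal solution of the Shannon bound LP for $G$; thus $h$ assigns a real number $h(X)$ to each $X \subseteq X_G$, satisfies all the constraints of Theorem \ref{Thm:ShannonEntropy}, and has $h(X_G) = \Sh(G)$. I would define a function $h'$ on the subsets of $X_{G'}$ by
\[
h'(X) = h(X \cup X_W) - h(X_W),
\]
which plays the role of the conditional entropy $H(X \mid X_W)$, and then verify that $h'$ is feasible for the Shannon bound LP of $G'$.

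Checking the nonnegativity $h'(X) \geq 0$ and the bound $h'(X) \leq |X|$ uses only standard consequences of the Shannon inequalities: monotonicity $h(X \cup X_W) \geq h(X_W)$ for the former, and subadditivity $h(X \cup X_W) \leq h(X) + h(X_W)$ (valid since $X \subseteq X_{G'}$ and $X_W$ are disjoint) together with $h(X) \leq |X|$ for the latter. For the Shannon inequality itself, substituting the definition of $h'$ into $h'(X,Z)+h'(Y,Z)-h'(X,Y,Z)-h'(Z)$ makes the four copies of $h(X_W)$ cancel, leaving exactly the Shannon inequality for $h$ with $Z$ enlarged to $Z \cup X_W$, which holds by feasibility of $h$.

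The step I expect to be the main obstacle is verifying the graph constraint (Property \ref{Item:VarGraphConstraint} of Theorem \ref{Thm:ShannonEntropy}) for $G'$. Suppose $A, B \subseteq V(G')$ with $\Gamma^-(u) \subseteq B$ in $G'$ for every $u \in A$. Since $G'$ is an induced subgraph, the in-neighbourhood of $u$ in $G'$ is exactly $\Gamma^-(u) \cap V(G')$ computed in $G$, so every in-neighbour of $u$ in $G$ lies in $B \cup W$; that is, the $G$-in-neighbourhood of each $u \in A$ is contained in $B \cup W$. Applying Property \ref{Item:VarGraphConstraint} in $G$ to the pair $(A, B \cup W)$ gives $h(X \cup Y \cup X_W) = h(Y \cup X_W)$, where $X$ and $Y$ are the variable sets associated with $A$ and $B$. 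Subtracting $h(X_W)$ from both sides yields $h'(X \cup Y) = h'(Y)$, which is precisely the required identity in $G'$.

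Having established feasibility of $h'$, and since the objective of the $G'$-LP is to maximise $H(X_{G'})$, we conclude
\[
\Sh(G') \geq h'(X_{G'}) = h(X_{G'} \cup X_W) - h(X_W) = h(X_G) - h(X_W) = \Sh(G) - h(X_W).
\]
Finally, the upper-bound constraint applied to $h$ gives $h(X_W) \leq |X_W| = |V(G)| - |V(G')|$, so $\Sh(G') \geq \Sh(G) - (|V(G)| - |V(G')|)$, which rearranges to the claimed inequality $\Sh(G) \leq |V(G)| - |V(G')| + \Sh(G')$.
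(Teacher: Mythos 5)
Your proof is correct, and it reaches the lemma by a genuinely different mechanism than the paper, even though both rest on the same key substitution: conditioning on the deleted vertices. The paper works on the \emph{dual} side and removes one vertex at a time: it takes the LP-duality certificate proving $H(X_{G'}) \leq \Sh(G')$, replaces every occurrence of $H(X)$ by $H(X,X_u)-H(X_u)$, checks that this turns each constraint of the $G'$ program into an inequality implied by the $G$ program, concludes $H(X_G)-H(X_u)\leq \Sh(G')$, and finishes by induction on $|V(G)|-|V(G')|$. You work on the \emph{primal} side: you map an optimal point $h$ of the $G$ program to $h'(X)=h(X\cup X_W)-h(X_W)$ and verify feasibility for the $G'$ program, handling all deleted vertices at once. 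This buys two simplifications: no appeal to LP duality is needed (only the trivial fact that a feasible point lower-bounds the maximum), and no induction. The constraint verifications are in substance the same in both arguments, with one caution: since $h$ is only a formal feasible vector rather than a true entropy vector, your uses of ``monotonicity'' and ``subadditivity'' must be derived from the LP constraints themselves; this does work, as monotonicity $h(X\cup X_W)\geq h(X_W)$ is constraint (3) of Theorem \ref{Thm:ShannonEntropy} with $X=Y$, and subadditivity follows from constraint (3) with $Z=\emptyset$ together with $h(\emptyset)=0$, which constraints (1) and (2) force. Your treatment of the graph constraint --- using that $G'$ is induced, so the in-neighbourhood of $u\in V(G')$ within $G'$ equals $\Gamma^-(u)\cap V(G')$ and hence lies in $B\cup W$ --- spells out precisely the detail the paper compresses into ``the same happens for constraints of types (3) and (4)''.
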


\begin{proof}
It is sufficient to prove the result only for induced subgraphs $G'$
for which $|V(G)|-|V(G')|=1$, as the result then follows by induction on $|V(G)|-|V(G')|$. Let $u\in V(G)$ be the vertex that is
removed from $G$ to produce $G'$.

The Shannon bound for $G'$ comes from solving a linear program, and
as such the solution to the dual program naturally gives us a proof
that $H(X_{G'}) \leq \Sh(G')$. In
particular, this proof consists of summing appropriate linear
combinations of the constraints. Suppose that in each such contraint we replace $H(X)$ with $H(X,X_u)-H(X_u)$ for every $X\subset X_{G'}$. This effectively
would replace constraints from the linear program for $G'$ with
inequalities which are implied from the linear program for $G$. For
example, $H(X)\geq 0$ for $G'$, would become $H(X,X_u)-H(X_u)\geq 0$
for $G$ (which is true by Shannon's inequality). As another example, $H(X)\leq |X|$
becomes $H(X,X_u)-H(X_u)\leq |X|$ (which is true as $H(X,X_u)\leq
H(X)+H(X_u)\leq |X|+H(X_u)$). This shows that all constraints in Theorem \ref{Thm:ShannonEntropy} of types (1) and (2) can be replaces as claimed. The same happens for constraints of types (3) and (4). Consequently, under this transformation, the proof that $H(X_{G'}) \leq \Sh(G')$ 
becomes a proof that $H(X_{G})-H(X_u) \leq \Sh(G')$. Since $H(X_u)\leq 1$ the result immediately
follows.
\end{proof}

We have seen that by removing vertices from a graph $G$ we make the
game easier allowing us to upper bound $\gn(G)$. Another way we can
make the game easier is by adding extra edges to $G$. Consequently
we can avoid the Shannon bound calculation by also using the
asymptotic guessing number of supergraphs of $G$ which have the same
number of vertices as $G$.

We end this section by considering the problem of how to calculate
the non-Shannon bounds, i.e.\ the Zhang-Yeung bound, the
Dougherty-Freiling-Zeger bound, and the Ingleton bound. They all
involve inequalities on $4$ variables and consequently a naive
approach is to add at least $2^{4n}$ inequalities to the linear
program, where $n$ is the order of the graph. Unfortunately such a
linear program is far too large to be computationally feasible. Our
approach is given by the following algorithm:
\begin{enumerate}
\item Let $\mc{C}$ be the set of Shannon bound constraints.
\item\label{Item:Prog4Var} Solve the linear program which consists only of constraints
$\mc{C}$.
\item\label{Item:Check4Var} Check if the solution satisfies all required $4$ variable information
inequalities (e.g.\ the Zhang-Yeung inequalities if we are
calculating the Zhang-Yeung bound).
\begin{enumerate}
\item If all the inequalities are satisfied then terminate,
returning the objective value.
\item If one of the inequalities is not satisfied add this
constraint to $\mc{C}$ and go back to \ref{Item:Prog4Var}.
\end{enumerate}
\end{enumerate}

We note that due to the large number of inequalities, Step
\ref{Item:Check4Var} can take a while. So it is advisable to add 
some extra constraints to the linear program to limit the search 
to a solution which is symmetric under the automorphisms of the
graph (there always exists such a solution due to the linearity of
the problem). This extra symmetry can be used to avoid checking a
significant proportion of the inequalities in Step
\ref{Item:Check4Var}.

\section{Open Problems}\label{Sec:Problems}



Problem \ref{Prob:Reverse} asks whether there exists an irreversible guessing game, i.e.\ a guessing game $G$ such that $\gn(G)\neq\gn(\Reverse(G))$. This can be answered in the affirmative if $\gn(R^L)$ can be shown to be strictly larger than $\frac{27}{4}=6.75$. Unfortunately, this might be hard to prove as it would establish the existence of a non-linear guessing strategy that improves the lower bound we derived. 

It would also be interesting to determine the exact value of $\gn(R^-)$ as $R^-$ according to our calculations is the only undirected graph on at most $10$ vertices whose guessing number remains undetermined. Any lower bound that implies $\gn(R^-) > \frac{20}{3}$ would show that there exists a non-linear guessing strategy that outperforms the fractional clique cover strategy for $R^-$.

\section{Acknowledgements}
We would like to thank Peter Cameron and Peter Keevash. An extended abstract of this paper appeared in \cite{ConferencePaper} and we would like to thank the three anonymous referees for their useful comments. This work was partly supported by EPSRC ref: EP/H016015/1.

\end{document}